\newtheoremstyle{newthm}
  {\topsep}   
  {\topsep}   
  {\itshape}  
  {0pt}       
  {\scshape} 
  {.}         
  {5pt}  
  {}          
\theoremstyle{newthm}
\newtheorem{thm}{Theorem}
\newtheorem{prop}[thm]{Proposition}
\newtheorem{lem}[thm]{Lemma}
\newtheorem{cor}[thm]{Corollary}
\newcommand{\IR}{\mathds{R}}
\newcommand{\IN}{\mathds{N}}
\newcommand{\IC}{\mathds{C}}
\newcommand{\onevec}{{\mathbf{1}}}
\renewcommand{\leq}{\leqslant}
\renewcommand{\ge}{\geqslant}
\renewcommand{\geq}{\geqslant}
\newcommand{\one}{\mathds{1}}
\DeclareMathOperator\In{In}
\DeclareMathOperator\Out{Out}
\title{Approximate Consensus in Highly Dynamic Networks: \\
	 The Role of Averaging Algorithms}
\author{Bernadette Charron-Bost\textsuperscript{1} \and Matthias
F\"ugger\textsuperscript{2} \and Thomas Nowak\textsuperscript{3}}
\date{\textsuperscript{1} CNRS, \'Ecole polytechnique\\
\textsuperscript{2} TU Wien\\
\textsuperscript{3} ENS Paris}
\begin{document}
\maketitle

\begin{abstract}
In this paper, we investigate the approximate consensus problem in highly dynamic networks 
        in which topology may change continually and unpredictably.
We prove that  in both synchronous and partially synchronous systems,
        approximate consensus is solvable if and only if the
        communication graph in each round has a rooted spanning tree, i.e., there is a coordinator at each time.
The striking point in this result is that the coordinator is not required  to be unique and can
        change  arbitrarily from round to round.
Interestingly, the  class of averaging algorithms, which are memoryless
	and require no process identifiers,  entirely captures the solvability issue
        of approximate consensus in that the problem is solvable if and only if
        it can be solved using any averaging algorithm.

Concerning the time complexity of averaging algorithms, we show that approximate consensus can be 
	achieved with precision of $\varepsilon$ 
        in a coordinated network model in $O(n^{n+1} \log\frac{1}{\varepsilon})$ synchronous 
        rounds, and in $O(\Delta n^{n\Delta+1} \log\frac{1}{\varepsilon})$ rounds when the maximum 
        round delay for a message to be delivered is $\Delta$.
While in general, an upper bound on the time complexity of averaging algorithms has to be exponential,
        we investigate various network models in which this exponential bound in the number of nodes
        reduces to a polynomial  bound.

We  apply our results to networked systems with a fixed topology and classical benign  fault models,
        and deduce  both known and new results for approximate consensus in these systems.
In particular,  we show that for solving approximate consensus, a complete network can 
        tolerate up to $2n-3$ 
        arbitrarily located  link faults at every
        round, in contrast with the impossibility result 
         established by Santoro and Widmayer (STACS '89) showing that 
         exact consensus is not solvable with $n-1$ link faults per round originating from the same node.
\end{abstract}

\setcounter{footnote}{3}

\section{Introduction}

Recent years have seen considerable interest in the design of distributed algorithms
	for dynamic networked systems.
Motivated by the emerging applications of the Internet and mobile sensor systems,
	the design of distributed algorithms for networks with a swarm of nodes and  time-varying
	connectivity has been the subject of much recent work.
The algorithms implemented in such dynamic networks ought to be  decentralized, using local information,
	and resilient to mobility and link failures.

A large number of distributed applications  require to reach some kind of agreement in the network
	in finite time.
For example,  processes may attempt to agree on whether to commit or abort the results of
	a distributed database transaction; or sensors may try to agree on estimates of
	a certain variable; or vehicles may attempt to align their direction of motions with their 
	neighbors.
Another example is clock synchronization where processes attempt to maintain a common time scale.
In the first example, an {\em  exact consensus} is achieved on one of the outcomes (namely,  commit or
	abort) as opposed to the other examples where processes are required to agree on values
	that are sufficiently close to each other, but not necessarily equal.
The latter type of agreement is referred to as {\em approximate consensus}.
	
For the exact consensus problem, one immediately faces  impossibility results in truly dynamic networks
	in which some stabilization of the network during a sufficiently long period of time  is not assumed
	(see e.g.~\cite{SW89} and~\cite[Chapter 5]{Lyn96}).
Because of its wide applicability, the approximate consensus problem appears as an interesting weakening of
	exact consensus to circumvent these impossibility results.
The objective of the paper is exactly to study computability and complexity of approximate consensus 
	in dynamic networks in which the topology may change continually and in an unpredictable way.

\subsection{Dynamic networks}

We consider a fixed set of processes that operate in rounds and communicate 
	by broadcast.
In the first part of this article, rounds are supposed to be synchronous in the sense that
	the messages  received at some round  have been sent at that round.
Then we extend our results to partially synchronous rounds with a maximum allowable
	delay bound.

At each round, the communication graph is chosen arbitrarily among a  set
	of directed graphs that determines the network model.
Hence the communication graph can change continually and unpredictably from one
	round to the next.
Then the local algorithm at each process applies a state-transition function to its current state 
	and the messages received from its incoming neighbors
	in the current communication graph to obtain a new state. 

While local algorithms can be arbitrary algorithms in principle, the basic idea is to keep them simple
	so that coordination and agreement do not result from the local computational powers
	but from the flow of information across the network.
In particular, we focus on {\em averaging algorithms} which  repeatedly form convex combinations.
One main feature of averaging algorithms is to be memoryless in the sense that 
	the next value of each process is entirely determined only from the values
	of its incoming neighbors in the current communication graph.
More importantly, they work in anonymous networks, not requiring processes to have identifiers.

The  network model we consider unifies a wide variety of dynamic networks. 
Perhaps the most evident class of networks captured by this model is dynamic multi-agent networks
	in which communication links frequently go down while other links are established
	due to the mobility of the agents.
The dynamic network model can also serve as an abstraction for static or dynamic wireless
	networks in which collisions and interferences make it difficult to predict which messages 
	will be delivered in time.
Finally the dynamic network model  can be used to model traditional communication networks
	with a fixed communication graph (e.g., the complete graph) and some transient link failures.
	
In this model, the number of processes~$n$ is fixed, and we assume that each process
	knows~$n$.
This assumption can be weakened to a large extent.
Indeed all our results still hold when~$n$ is not the exact number of processes
	but only an upper bound on this number.
That allows us to extend the results to a completely dynamic network
	with  a maximal number of processes  that may join or
	leave.
This limit even disappears for the {\em asymptotic consensus} problem
	in which  processes are just required to converge to the same value.
	
Finally, for simplicity, we  assume that all processes start the computation at the same 
	round. 
In fact, it is sufficient to assume that every process eventually participates to the
	computation either spontaneously (in other words,  it initiates the computation)
	or by receiving, possibly indirectly, a message from an initiator.

\subsection{Contribution}

We make the following contributions in this work:

(i) The main result in this paper is the exact characterization of the network models
	in which approximate consensus is solvable.
We prove that the approximate consensus problem is solvable in a network model
	if and only if each communication graph in this model has a rooted spanning tree.
This condition guarantees that 
	the network has at least one coordinator in each round.
The striking point is that 
	coordinators may continually change over time without preventing nodes
	from converging to consensus.
Accordingly the network models in which 
	approximate consensus is solvable are called {\em coordinated network models}.
The proof of this computability result highlights the key role played by averaging
	algorithms in approximate consensus: the problem is solvable if and only if
	it can be solved using any averaging algorithm.

(ii) With  averaging algorithms, we show that agreement with precision of~$\varepsilon$ can be reached in 
	$O \left ( n ^{n  + 1} \log \frac{1}{\varepsilon} \right )$ rounds  
	in a coordinated network model, and in only $n \log \frac{1}{\varepsilon} $ rounds
	in the case of  a {\em nonsplit network model}, defined as a collection
	of communication graphs in which any two processes have at least one common incoming neighbor.
As a matter of fact, every general upper bound
	for the class of averaging algorithms has to be exponential since  the {\em equal neighbor} averaging algorithm 
	requires $\Omega(2^{n/3}\log\frac{1}{\varepsilon})$ 
	rounds to reach agreement with precision of~$\varepsilon$
	for the network model in~\cite{OT11v1}.

(iii) As an application, we revisit the problem of approximate consensus in the
	context of communication faults, whether they are due to  link or process failures.
We  first prove a new result  on the solvability of approximate consensus in a complete  
	network model in the presence of benign communication faults which shows that the number 
	of link faults that can be tolerated increases by a factor of at least 2 
	when solving  approximate consensus instead of consensus.
Then we prove the correctness of  fault-tolerant 
	approximate consensus  algorithms in a complete network by 
	interpreting them as averaging algorithms.
That allows us to extend the scope of these algorithms originally designed for the static crash failure model
	to a completely dynamic failure model. 
	
(iv) Finally we extend our computability and complexity results to the case of {\em partially synchronous rounds}
	in which communication delays may be non null, but are bounded by some positive
	integer $\Delta$: the messages  received at some round~$k$  have not necessarily been sent at round~$k$,
	but  at some round~$\ell$ with $\ell \in \{k-\Delta-1, \dots , k \}$.
We prove  the same necessary and sufficient condition on network models  for 
	solvability of approximate consensus, and  give an $O \left ( n ^{n\Delta  + 1} \log \frac{1}{\varepsilon} \right )$ 
	upper bound on the number of rounds needed by averaging algorithms to achieve agreement with precision of~$\varepsilon$ 
	in a coordinated network model.
For nonsplit network models, this bound reduces to the polynomial bound $O \left ( \Delta n^{2\Delta  -1 }  \log \frac{1}{\varepsilon}  \right )$.

\subsection{Related work}

Agreement problems have been extensively studied in the framework of static
	communication graphs or with limited topology 
	changes~(e.g.,~\cite{Lyn96,AW05, PBE07,Vai14}).
In particular, the approximate consensus problem, also called {\em approximate
	agreement}, is studied in numerous papers in the context of a complete 
	graph and at most $f$ faulty processes~(see, e.g.,~\cite{DLPSW86, Fek90,ALS94}).
In the case of benign failures where processes may crash or  omit to send some messages,
	the failure model yields communication graphs with a {\em fixed} core of at least $n-f$ processes
	that have outgoing links to all processes, and so play the role of
	steady coordinators of  the network.

There is also a large body of previous work  on general dynamic networks.
However, in much of them, topology changes are restricted and the sequences of
	communication graphs are supposed to be ``well-formed'' in  various senses.
Such well-formedness properties are actually opposite to the idea of
	unpredictable changes.
In~\cite{AFJ06}, Angluin, Fischer, and Jiang study the {\em stabilizing consensus problem}
	in which nodes are required to agree exactly on some initial value, but without necessarily 
	knowing  when agreement is reached, and they
	assume  that any two nodes can directly communicate infinitely often.
In other words, they suppose the limit graph formed by the links that occur infinitely often to be complete.
To solve the consensus problem, Biely, Robinson, and Schmid~\cite{BRS12} assume that 
	throughout every block of $4n-4$ consecutive communication graphs there exists 
	a stable set of roots.
Coulouma and Goddard~\cite{CG13} weaken the latter stability condition to obtain a characterization
	of the sequences of communication  graphs for which consensus is solvable.
Kuhn, Lynch, and Oshman~\cite{KLO10} study  variations of the {\em counting}  problem; they  
	assume bidirectional links and a stability property, namely the $T$-{\em interval
	connectivity} which stipulates that there exists a stable spanning tree over every $T$ consecutive 
	communication graphs.
All their computability results actually hold in the case of 1-interval connectivity which reduces to
	a property on the {\em set} of possible communication graphs, and the cases $T>1$ are 
	investigated just to improve complexity results.
Thus they  fully model unpredictable topology changes, at least for computability results
	on counting in a dynamic network.
	
The network model in~\cite{KLO10}, however, assumes a static set of nodes and communication graphs 
	that are all bidirectional  and connected.
The same  assumptions are made to study the time complexity of several variants of consensus~\cite{KMO11}
	in dynamic networks.
Concerning the computability issue,  such strong assumptions make exact agreement trivially solvable: 
	since communication graphs are continually strongly connected, nodes can collect the set of initial values
	and then make a decision on the value of some predefined function of this set
	(e.g., majority, minimum, or maximum).	

The most closely related pieces of work are doubtless those about asymptotic consensus
	and more specifically  {\em consensus sets} studied by Blondel and Olshevsky~\cite{BO13}:
	a consensus set is a set of stochastic matrices such that every infinite product of
	matrices from this set converges to a rank one matrix.
Computations of averaging algorithms correspond to infinite products of stochastic 
	matrices, and the property of asymptotic consensus is captured by  convergence to a rank one
	matrix.
Hence when an upper bound on the number of nodes is known, the general 
	notion  of  network models in which approximate consensus
	is solvable  reduces to the notion of consensus sets 
	if  we restrict ourselves to averaging algorithms.
However the characterization of consensus sets in~\cite{BO13}  is not  included into
	our main computability result for approximate consensus, namely Corollary~\ref{cor:solvac},
	since the fundamental assumption of a self loop at each node in  communication graphs
	(a process can obviously communicate with itself)
	does not necessarily hold for the directed graphs associated to  stochastic matrices in a consensus set.
The characterization of compact consensus sets in~\cite{BO13} and our computability result of
	approximate consensus  are thus incomparable.

In the same line, some of  our positive results which make use of averaging algorithms
	can be shown equivalent to results about stochastic matrix products in the
	vast existing literature on asymptotic consensus~\cite{DeG74,CS77,Mor05,AB06,
	CMA08a,LL10,TN11,LMMAY11,Cha13,XC12,HT13,Now13}.
Notably Theorem~\ref{thm:coord} 
	is  similar to the central result in~\cite{CMA08a}, but
	we give a different proof much simpler and direct as it requires neither
	star graphs (called {\em strongly rooted graphs}  in~\cite{CMA08a}) nor 
	Sarymsakov graphs~\cite{XC12}.
Moreover our proof yields a significantly better upper bound on the time complexity of averaging algorithms 
	in coordinated network models, namely $O \big (n^{n+1} \log \frac{1}{\varepsilon} \big)$	 instead
	of $O \big(n^{n^2} \log \frac{1}{\varepsilon} \big)$ in~\cite{CMA08b}.
The  statement in Theorem~\ref{thm:coord:delay} for the partial synchronous case with bounded delays
	already appears in~\cite{CMA08b,Cha13},
	but our proof strategy, which consists in a reduction to the case of synchronous nonsplit
	networks, yields  a new and simpler proof.

\section{Approximate consensus and averaging algorithms}

We assume a distributed, round-based computational model in the spirit
     of the Heard-Of model by Charron-Bost and Schiper~\cite{CS09}.
A system consists in a set  of processes $[n] = \{1,\dots,n\}$.
Computation proceeds in {\em rounds}:
In a round, each process sends its state to its outgoing neighbors,
	 receives values from its incoming neighbors, and
     finally updates its state based.
The  value of the updated state is determined by a deterministic
     algorithm, i.e., a transition function that maps the values in the
     incoming messages  to a  new state value.
Rounds are communication closed in the sense that no process receives
     values in round~$k$ that are sent in a round different from~$k$.

Communications that occur in a round are modeled by a directed graph~$G=([n], E(G))$ 
	with  a self-loop at each node.
The latter requirement is quite natural as a process can obviously communicate with 
	itself instantaneously.
Such a directed graph is called a  {\em communication graph}.
We denote by~$\In_p(G)$ the set
     of incoming neighbors of~$p$ and by $\Out_p(G)$ the set of
     outgoing neighbors of~$p$ in~$G$.
Similarly $\In_S(G)$ and  $\Out_S(G)$ denote the sets of the incoming and
	outgoing neighbors of the nodes in a non-empty set $S \subseteq [n]$.
Since there is a self-loop at each process,   
	$	S \subseteq \In_S(G) \cap \Out_S(G) $,
	and so  both $\In_S(G)$  and $\Out_S(G)$ 
	are  non-empty.
The  cardinality of~$\In_p(G)$, i.e.,  the number of incoming neighbors of~$p$,
	is called the {\em in-degree} of process $p$ in~$G$.

A {\em communication pattern\/} is a sequence~$(G(k))_{k\ge 1}$ of
     communication graphs.
For a given communication pattern, $E(k)$,
	$\In_p(k)$ and  $\Out_p(k)$ stand for
	$E \left( G(k) \right)$, $\In_p(G(k))$ and $\Out_p(G(k))$, respectively.

Each process~$p$ has a {\em local state\/} $s_p$ the values of which at
     the end of round~$k \ge 1$ is denoted by~$s_p(k)$.
Process~$p$'s initial state, i.e., its state at the beginning of round~$1$,
     is denoted by~$s_p(0)$.
Let the {\em global state} at the end of round~$k$ be the collection
     $s(k) = (s_p(k))_{p \in [n]}$.
The {\em execution\/} of an algorithm from global initial state~$s(0)$,
     with communication pattern~$(G(k))_{k\ge 1}$ is the unique
      sequence $(s(k))_{k\ge 0}$ of global states defined as follows:  for  each round~$k \ge 1$, 
      process~$p$ sends~$s_p(k-1)$ to all the processes in~$\Out_p(k)$,
     receives $s_q(k-1)$ from each process~$q$ in~$\In_p(k)$, and
     computes~$s_p(k)$ from the incoming messages, according to the
     algorithm's transition function.
 
\subsection{Consensus and approximate consensus}
 
A crucial problem in distributed systems is to achieve agreement among local process states
     from arbitrary initial local states.
It is a well-known fact that this goal is not easily achievable in the context 
	of dynamic network changes~\cite{FLP85,SW89}, and restrictions on communication patterns
	are required for that.
We thus define a {\em network model\/} as a non-empty set ${\cal N}$ of
     communication graphs, those that may occur in communication patterns.

We now consider the above round-based algorithms in which the local state of 
	process~$p$ contains two variables~$x_p$ and~$dec_p$. 
Initially the range of $x_p$ is $[0,1]$
 	and $dec_p = \bot$ (which informally means that $p$
	has not  decided).\footnote{%
	In the case of  {\em binary consensus\/}, $x_p$  is restricted to be initially from $\{0,1\}$.}
Process~$p$ is allowed to set~$dec_p$ to the current value of $x_p$, and so to a value~$v$ different from $\bot$,
	     only once; in that case we say that~$p$ {\em decides}~$v$.	
An algorithm {\em  achieves consensus with the communication pattern}~$(G(k))_{k\ge 1}$ 
	 if each execution from a global initial state as specified above and
             with the communication pattern~$(G(k))_{k\ge 1}$   fulfills the following three conditions:               
	\begin{description}
	  \item{\em Agreement.} The decision values of any two processes are equal.

	  \item{\em Integrity.} The decision value of any process is an initial value.

	  \item{\em Termination.} All processes eventually decide.

	\end{description}
An algorithm {\em  solves consensus\/} in a network
	     model~${\cal N}$ if it achieves consensus with each communication
	     pattern formed with graphs all in ${\cal N}$.
Consensus is {\em solvable in a network model~${\cal N}$\/} if there
     exists an algorithm that solves  consensus in~${\cal N}$.
Observe that consensus is solvable in~${\cal N}$ in $n-1$ rounds if each communication graph 
	in~${\cal N}$ is strongly connected. 
The following impossibility result  due to Santoro and Widmayer~\cite{SW89}, however, shows that network
     models in which consensus is solvable are highly constrained:
     consensus is not solvable in some ``almost complete'' graphs.
     
\begin{thm}[\cite{SW89}]\label{thm:SW}
Consensus is not solvable in the network model comprising all
     communication graphs in which at least~$n-1$ processes have
     outgoing links to all other processes.
\end{thm}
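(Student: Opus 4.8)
The plan is to argue by contradiction via a synchronous bivalence argument in the style of FLP, exploiting the very specific structure of this network model. First I would observe that it suffices to rule out \emph{binary} consensus, since any algorithm solving consensus restricts to one solving binary consensus (inputs in $\{0,1\}$). The key structural remark is that a communication graph lies in $\mathcal N$ exactly when \emph{at most one} process fails to broadcast to everyone: every $G\in\mathcal N$ is the complete graph $K$ on $[n]$, possibly with some outgoing edges of a single \emph{deficient} vertex $p$ deleted (self-loops always remain). Thus the adversary's only per-round power is to silence the broadcast of one process $p$ toward an arbitrary set of receivers. Call a reachable global state $C$ \emph{$v$-valent} if every execution extending $C$ with graphs in $\mathcal N$ decides $v$, and \emph{bivalent} if both $0$ and $1$ are still reachable as decisions. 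Since decisions are irrevocable, Agreement forces any state in which some process has decided $v$ to be $v$-valent; hence a bivalent state is undecided.

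Next I would prove an \emph{initial bivalence} lemma: some initial global state is bivalent. Suppose not. By Integrity the all-$0$ state is $0$-valent and the all-$1$ state is $1$-valent, so along the Hamming path flipping one input at a time there are adjacent initial states $C,C'$ differing only in the input of a single process $p$, with $C$ being $0$-valent and $C'$ being $1$-valent. Now run from both $C$ and $C'$ the (admissible) execution in which $p$ is the deficient process in \emph{every} round and sends to nobody. A straightforward induction shows that every process $q\ne p$ has the same incoming messages and hence the same state in both executions, so the processes $q\ne p$ make identical decisions; by Agreement the common decision is the same value $d$ in both runs. But $0$-valence of $C$ forces $d=0$ while $1$-valence of $C'$ forces $d=1$, a contradiction.

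The heart of the proof is the \emph{extension} lemma: every bivalent reachable state $C$ has a bivalent one-round successor in $\mathcal N$. Assume instead that all successors of $C$ are univalent; since $C$ is bivalent there is an admissible $G_0$ whose successor is $0$-valent and an admissible $G_1$ whose successor is $1$-valent. The idea is to connect $G_0$ to $G_1$ by a path of admissible graphs in which consecutive graphs differ in exactly one edge, and to show that such a single-edge toggle cannot change the valency of the successor. For the toggle argument, if two admissible graphs differ only in whether the edge $s\to q$ is present, then the two successors of $C$ agree in the state of every process except the receiver $q$, because only $q$'s incoming multiset changes. Silencing $q$ forever afterwards (admissible, since only $q$ is deficient each round) makes the views of all processes $\ne q$ identical in the two continuations, so they decide the same value; by Agreement so does $q$, contradicting that one successor is $0$-valent and the other $1$-valent. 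For the path, I would go $G_0\to\cdots\to K\to\cdots\to G_1$: first reinsert the missing outgoing edges of $G_0$'s deficient vertex one at a time to reach the complete graph $K$, then delete the missing outgoing edges of $G_1$'s deficient vertex one at a time; every intermediate graph has at most one deficient vertex, hence lies in $\mathcal N$, and consecutive graphs differ by one edge out of a common vertex. Since each step preserves valency but the endpoints have opposite valency, we reach a contradiction.

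Finally, combining the two lemmas, starting from a bivalent initial state one can extend indefinitely through bivalent states, producing an infinite admissible execution in which no process ever decides, contradicting Termination. I expect the extension lemma to be the main obstacle: the delicate points are verifying that the single-edge toggle changes only the receiver's state and that the chosen path from $G_0$ to $G_1$ stays inside $\mathcal N$ at every step while changing only one edge per step. The remaining subtleties are the bookkeeping that a silenced process is always a \emph{legal} single deficient vertex and that irrevocable decisions make bivalent states undecided.
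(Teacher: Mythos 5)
Your proof cannot be compared against a proof in the paper, because the paper does not prove this theorem: it is imported verbatim from Santoro and Widmayer~\cite{SW89}, with only the remark that the original statement (phrased for link faults in a complete graph) extends trivially to the dynamic-network formulation. So the relevant comparison is with the original argument of~\cite{SW89}, which is itself a synchronous bivalence argument --- and your reconstruction is correct and essentially that argument. Your structural reading of the model is exactly right (each admissible graph is the complete graph with some outgoing edges of at most one ``deficient'' vertex removed, self-loops intact), and all three components hold up: (i) initial bivalence via a Hamming path between the all-$0$ and all-$1$ inputs, with the flipped process silenced forever --- an admissible pattern, since silencing one process each round keeps $n-1$ processes fully broadcasting; (ii) the single-edge toggle preserving valency, since toggling $s\to q$ alters only $q$'s state after one round, and silencing $q$ thereafter makes the two continuations indistinguishable to everyone else, so Agreement and the assumed univalence of both successors force equal valencies; together with the path $G_0\to\cdots\to K\to\cdots\to G_1$ through the complete graph, whose intermediate graphs each have at most one deficient vertex and hence stay in $\mathcal{N}$; and (iii) the resulting infinite execution through bivalent (hence undecided) states, which violates Termination --- with no fairness caveat needed, since in this round-based model every infinite sequence of graphs from $\mathcal{N}$ is an admissible communication pattern. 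The only cosmetic omission is the implicit assumption $n\ge 2$ (needed so that a process $q\ne p$ exists), which is harmless since consensus is trivial for $n=1$.
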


The above theorem is  originally stated by Santoro and Widmayer  in the context of link faults
	in a complete communication graph but its scope can be trivially extended to dynamic 
	communication networks.

To circumvent the impossibility of consensus even in such
     highly restricted network models, one may  weaken Agreement into 
 \begin{description}
\item{\em $\varepsilon$-Agreement.} The decision values of any two processes
  are within an {\it a priori} specified~$\varepsilon > 0$.
\end{description}
and replace  Integrity by:       
\begin{description}
\item{\em Validity.}  All decided values are in the range of the initial values of
     processes.
\end{description}

An algorithm {\em  achieves $\varepsilon$-consensus with the communication pattern}~$(G(k))_{k\ge 1}$ 
	 if each execution from a global initial state as specified above and
             with the communication pattern~$(G(k))_{k\ge 1}$   fulfills Termination, 
             Validity, and $\varepsilon$-Agreement. 
 An algorithm {\em  solves approximate consensus\/} in a network
	     model~${\cal N}$ if for any $\varepsilon >0$, it achieves $\varepsilon$-consensus with each communication
	     pattern formed with graphs all in~${\cal N}$.
Approximate consensus is {\em solvable in a network model~${\cal N}$\/} if 
	there exists an algorithm that solves   approximate consensus in~${\cal N}$.

\subsection{Averaging algorithms}

We now focus on  {\em averaging algorithms} defined by the update rules 
	for each variable $x_p$ which are of the form:
	\begin{align}
 	x_p(k) = \sum_{q \in \In_p(k)} w_{qp}(k) \, x_q(k-1) ,\label{eq:update}
	\end{align}
where $w_{qp}(k)$ are  positive real numbers with  $\sum_{q \in \In_p(k)} w_{qp}(k) = 1$.
 In other words,  at each round~$k$, process~$p$ updates~$x_p$ to some weighted average
	of the values~$x_q(k-1)$ it has just received.
For convenience, we let $w_{qp}(k) = 0$ if $q$ is not an incoming neighbor of $p$ in the
	communication graph of round~$k$.

An {\em averaging algorithm with parameter} $\varrho >0$ is an averaging algorithm with
	the positive weights uniformly lower bounded by~$\varrho\,$: 
	$$ \forall k\ge 1,\, p,q \in [n] \,:\, w_{qp}(k) \in \{0\}\cup[ \varrho , 1] \,.$$
Since we strive for distributed implementations of averaging
     algorithms, $w_{qp}(k)$ is required to be locally computable.
Finally note that the decision rule is not specified in the above definition of averaging algorithms:
	the decision time immediately follows from the number of rounds that is proven 
	to be sufficient to reach $\varepsilon$-Agreement.

Some averaging algorithms with locally computable weights are of particular interest, namely,
	the {\em equal neighbor averaging algorithm\/} and the {\em fixed weight averaging algorithms}.
      
In the equal neighbor averaging algorithm, at each round~$k$ process~$p$ chooses 
	\begin{equation}\label{eq:EN}
	w_{qp}(k) = 1/|\In_p(k)|
	\end{equation}
	for every $q$ in $\In_p(k)$.
It is clearly an averaging algorithm  with parameter $\varrho = 1/n$.

Given a network model ${\cal N}$, we denote by~$d_p^-\left ({\cal N} \right )$ the  maximum
	in-degree of process~$p$ over all the graphs in ${\cal N}$.
Each process~$p$ is  {\em a priori\/} assigned a positive parameter~$\alpha_p \geq d^-_p\left ({\cal N} \right )$.
In a  fixed weight averaging algorithm, at  every round~$k$, 
	process~$p$ chooses  
\begin{align}
w_{qp}(k) =  \begin{cases}
            1/\alpha_q & \text{ if } q \neq p\,,\\
            1 - \sum_{q\in \In_p(k) \setminus \{p\}} 1/\alpha_q & \text{ if } q=p\,.
            \end{cases}
\end{align}
	for each $q$ in $\In_p(k)$.
We verify that 
	$\varrho =  \min\{1/\alpha_p \mid p \in [n] \} $ is a positive lower bound on  positive weights.

\section{Solvability of approximate consensus}

In this section, we characterize the network models in which approximate consensus is 
	solvable.
For that we first prove that if any two processes have a common incoming neighbor 
	in each communication graph of a network model ${\cal N}$, then every averaging algorithm 
	solves approximate consensus in~${\cal N}$.
Then we extend this result to {\em coordinated network models} where each communication graph has a
	spanning tree with a root that plays the role of a {\em coordinator}. 
The latter result which is quite intuitive in the case of a fixed coordinator, actually holds when
	coordinators vary over time.
Conversely we show that  if approximate consensus is solvable in  ${\cal N}$,
	then ${\cal N}$ is necessarily a coordinated network  model.

\subsection{Nonsplit network model}

We say a directed graph~$G$ is {\em nonsplit\/} if for all pairs of
     processes~$(p,q) \in [n]^2$, it holds that
     \begin{equation}
       \In_p(G)  \cap \In_q(G)  \neq \emptyset   \,.
     \end{equation}
Accordingly we define a {\em nonsplit network model\/} as a network model 
	in which each communication graph is nonsplit.
Note that a special case of a nonsplit communication graph is one in
     which all processes have one common incoming neighbor, i.e.,
     hear of at least one common process~$r$.

Intuitively, the occurrence of a nonsplit communication graph makes
	 the variables~$x_p$   in an averaging algorithm to come closer  
	 together: by definition of nonsplit communication graphs, any two 
	 processes~$p$ and~$q$ have at least one common incoming neighbor~$r$, 
	 leading to a common term in both $p$'s and~$q$'s average.
The following theorem  formalizes this intuition, showing
     that approximate consensus is achieved in nonsplit network models.  

\begin{thm}\label{thm:nonsplit}
In a nonsplit network model of $n$ processes, every averaging algorithm with parameter~$\varrho$
	achieves $\varepsilon$-consensus in 
	$  \frac{1 }{\varrho} \log  \frac{1}{\varepsilon} $ rounds.
In particular, the equal neighbor averaging algorithm achieves $\varepsilon$-consensus in 
	$n \log \frac{1}{\varepsilon} $ rounds. 
\end{thm}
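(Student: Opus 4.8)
The plan is to track the \emph{spread} (or diameter) of the values across all processes,
\[
\delta(k) \;=\; \max_{p\in[n]} x_p(k) \;-\; \min_{p\in[n]} x_p(k),
\]
and to show that the nonsplit property forces $\delta$ to contract by a factor $1-\varrho$ at every round. The first observation is that, by~\eqref{eq:update}, each $x_p(k)$ is a convex combination of the values $x_q(k-1)$, so $\min_q x_q(k-1)\le x_p(k)\le \max_q x_q(k-1)$. Hence $\max_p x_p(k)$ is nonincreasing, $\min_p x_p(k)$ is nondecreasing, every value stays in the interval spanned by the initial values (which gives \emph{Validity} for free), and already $\delta(k)\le\delta(k-1)$.

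The core step is the contraction lemma $\delta(k)\le(1-\varrho)\,\delta(k-1)$. To prove it, I would fix any two processes $p$ and $q$, set $M=\max_s x_s(k-1)$ and $m=\min_s x_s(k-1)$, and pick a common incoming neighbor $r$ of $p$ and $q$, which exists because the round-$k$ graph is nonsplit. Isolating the term coming from $r$ and bounding every other summand by $M$ (resp.\ $m$) yields
\[
x_p(k) \;\le\; a\,x_r(k-1) + (1-a)\,M, \qquad x_q(k) \;\ge\; b\,x_r(k-1) + (1-b)\,m,
\]
with $a=w_{rp}(k)\ge\varrho$ and $b=w_{rq}(k)\ge\varrho$. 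Subtracting gives $x_p(k)-x_q(k)\le (a-b)\,x_r(k-1)+(1-a)M-(1-b)m$, whose right-hand side is affine in $x_r(k-1)\in[m,M]$ and therefore maximized at an endpoint. Taking the endpoint $M$ when $a\ge b$ and $m$ when $a<b$ makes the cross term cancel and collapses the bound to $(1-b)(M-m)$ and $(1-a)(M-m)$ respectively, each at most $(1-\varrho)\delta(k-1)$. Specializing $p$ and $q$ to the processes realizing the maximum and the minimum at round~$k$ gives $\delta(k)\le(1-\varrho)\delta(k-1)$.

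Iterating this, together with $\delta(0)\le 1$ since the initial values lie in $[0,1]$, yields $\delta(k)\le(1-\varrho)^k$. To convert this into the stated round count I would invoke the elementary inequality $-\log(1-\varrho)\ge\varrho$, which shows $(1-\varrho)^k\le\varepsilon$ as soon as $k\ge \log(1/\varepsilon)\big/\!\left(-\log(1-\varrho)\right)$, and in particular for $k=\frac{1}{\varrho}\log\frac{1}{\varepsilon}$. Since this threshold depends only on the known quantities $\varrho$ and $\varepsilon$ and is identical for all processes, each process can decide its current value $x_p(k)$ at that round, giving \emph{Termination}; and any two decision values then differ by at most $\delta(k)\le\varepsilon$, giving \emph{$\varepsilon$-Agreement}. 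The equal neighbor statement follows by substituting $\varrho=1/n$.

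The main obstacle is the contraction lemma, and specifically the fact that $p$ and $q$ generally assign \emph{different} weights $a\ne b$ to their shared neighbor $r$, so one cannot simply factor out a common term. The resolution is the case split on the sign of $a-b$ combined with evaluating the affine bound at the correct extreme value of $x_r(k-1)$, which is exactly what cancels the unwanted cross term and leaves a clean factor $1-\varrho$.
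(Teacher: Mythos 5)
Your proof is correct and yields exactly the paper's bound, but it gets there by a more elementary, self-contained route. The paper argues at the level of the stochastic matrices $W(k)$: it introduces Dobrushin's coefficient of ergodicity, bounds it by $1-\varrho$ for each nonsplit round, and then cites Seneta's result that this coefficient coincides with the matrix seminorm induced by the spread $\delta(x)=\max_p x_p-\min_p x_p$, so that sub-multiplicativity gives $\delta\left(W(k:1)\right)\le(1-\varrho)^k$ and hence $\delta\left(x(k)\right)\le(1-\varrho)^k$. Your contraction lemma establishes the same per-round estimate directly: isolating the common incoming neighbor $r$, bounding the remaining weight by $M$ and $m$, and maximizing the affine bound in $x_r(k-1)$ at an endpoint so that the cross term cancels even when the two weights $a\neq b$ differ --- this case split is precisely the content of the Dobrushin--Seneta contraction inequality in the special case needed here, re-proved from scratch. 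What the paper's seminorm formulation buys is reuse: sub-multiplicativity applies verbatim to blocks of rounds, which is how Corollary~\ref{cor:nonsplit} (the $K$-nonsplit case) and the partially synchronous extensions are then obtained with no additional work; with your approach one would have to rerun the pairwise argument on products of matrices (your observation that $\max_p x_p(k)$ is nonincreasing and $\min_p x_p(k)$ is nondecreasing makes that feasible, but it is an extra step). What your approach buys is transparency and self-containedness: nothing is cited beyond $1-a\le e^{-a}$, and the mechanism of contraction --- the shared mass contributed by the common neighbor --- is completely explicit. Your handling of Validity, Termination, and the conversion of $(1-\varrho)^k\le\varepsilon$ into the stated round count all match the paper's.
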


\begin{proof}
Validity is trivially satisfied by definition of an averaging algorithm.

For $\varepsilon$-Agreement, we first observe that the set of update rules~(\ref{eq:update}) can be concisely rewritten as 
	\begin{equation}\label{eq:recurrence}
	x(k) = W(k) x(k-1) 
	\end{equation}
	where $x(k)$ is the vector in $\IR^n$ whose $p$-th entry is the value held by process~$p$
	at time $k$ and $W(k)$ denotes the $n \times n$ matrix whose entry at the $p$-th row and $q$-th
	column is equal to 
	$$W_{p q}(k) = w_{q p}(k) .$$
By definition of an averaging algorithm, each matrix $W(k)$ is stochastic and
	 the positive entries are lower bounded by~$\varrho \in ]0,1]$. 

For any positive integers~$k$ and~$\ell$, $\ell  \geq k$, we let 
	$$W(\ell : k )= W(\ell)\dots W(k) .$$
In particular  $W(k : k )=  W(k) $. 
From the recurrence relation (\ref{eq:recurrence}) we derive 
	\begin{equation}\label{eq:expression}
	x(k) = W(k:1)x(0) \, .
	\end{equation}

Since each $G(k)$ is a nonsplit communication graph, we obtain that
     for any two processes~$p,q$ there is a process~$r$ with  
$$ \min (W_{p r}(k), W_{q r}(k)) \ge \varrho \,.$$

The coefficient of ergodicity of a stochastic matrix $P$ introduced by Dobrushin~\cite{Dob56}
	and defined by 
	$$\delta(P) = 1- \min_{p,q} \sum^n_{r=1} \min (P_{p r}, P_{q r}) $$
	thus satisfies the inequality 
	\begin{equation}\label{eq:normC}
	\delta \big( W(k) \big)  \leq 1 - \varrho .
	\end{equation}
Besides a result by Seneta~\cite{Sen79} combined with a straightforward argument of convex duality shows 
	that for any stochastic matrix~$P$, the coefficient $\delta(P)$ coincides with the matrix seminorm 
	 $$ \sup _{x\notin \IR \one}  \frac{ \delta (Px) } {\delta(x)} $$
	 associated to  the seminorm on $\IR^n$ defined by
	$\delta(x)= \max_{p} (x_p) - \min_{p} (x_p) $, where $ \IR \one$ is the line
	of vectors with equal components.
Consequently $\delta$ is a matrix seminorm, and so is sub-multiplicative.

Since  $\delta(P) \leq 1$ for any stochastic matrix $P$, we conclude that
	$$ \delta \left( W( k:1)\right) \leq \left (1 - \varrho \right)^{k} \, .$$
Because of the inequality $$ 1-a \leq e^{-a}$$  when 
	$a \geq 0$  and
	because $ \delta \left( x(0 )\right) \leq 1$, it follows that 
		if $k \geq \frac{1 }{\varrho} \log \frac{1}{\varepsilon}$,
		then $ \delta \left( x(k)\right) \leq\varepsilon$.
This completes the proof of Theorem~\ref{thm:nonsplit}.
\end{proof}

For technical purposes, we now extend Theorem~\ref{thm:nonsplit} in two directions.
We first observe that the 
	above proof does not make use of the assumption of a self-loop at each node of
	communication graphs.
Therefore Theorem~\ref{thm:nonsplit}  still holds for {\em generalized network models\/} 
	in which each process does not necessarily communicate with itself.
The second extension concerns the granularity at which the assumption of nonsplit 
	communication graphs holds.
Let the {\em product\/} of two directed graphs $G$ and $H$ with the same set of nodes $V$
	be the directed graph  $G \circ H$ with set of nodes~$V$ and a  link from $p$ to $q$ 
	 if there exists a node~$r$ such that $(p,r)\in E(G)$ and $ (r,q) \in E(H)$.
For any positive integer~$K$, we say a network model~${\cal N}$ is $K$-{\em nonsplit\/} if any product of $K$
	graphs from~${\cal N}$ is nonsplit. 

\begin{cor}\label{cor:nonsplit}
In a generalized $K$-nonsplit network model of $n$ processes, every averaging algorithm with parameter~$\varrho$
	achieves $\varepsilon$-consensus in 
	$ K \left (\frac{1 }{\varrho}\right )^{K} \log \frac{1}{\varepsilon} + \, K -1$ rounds.
\end{cor}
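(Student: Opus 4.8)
The plan is to reduce Corollary~\ref{cor:nonsplit} to the already-proven Theorem~\ref{thm:nonsplit} by regrouping the rounds of an execution into blocks of length $K$. The key observation is that the update matrices compose in a way that mirrors the graph product: if $G(k)$ is the communication graph of round~$k$, then the product matrix $W(\ell:k) = W(\ell)\cdots W(k)$ has a positive entry in position $(p,q)$ precisely when there is a directed path from $q$ to $p$ passing through the graphs $G(k), \dots, G(\ell)$ in sequence, i.e.\ exactly when $(q,p)$ is an edge of the iterated product $G(k) \circ G(k+1) \circ \cdots \circ G(\ell)$. Thus composing $K$ consecutive averaging steps yields a single stochastic matrix whose zero/nonzero pattern is governed by the $K$-fold product graph, which is nonsplit by the $K$-nonsplit hypothesis.

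First I would make the path-to-product correspondence precise and track the resulting weight bound. Each nonzero entry of $W(k)$ is at least $\varrho$, so by multiplying along a path of length $K$, each nonzero entry of the block product $B(j) := W(jK : (j-1)K+1)$ is at least $\varrho^K$. Since $B(j)$ is a product of stochastic matrices it is stochastic, and because the underlying product graph is nonsplit, for any two processes $p,q$ there is a common incoming neighbor $r$ in that product graph, giving $\min(B(j)_{pr}, B(j)_{qr}) \ge \varrho^K$. This is exactly the hypothesis that drives the proof of Theorem~\ref{thm:nonsplit}: the Dobrushin coefficient satisfies $\delta(B(j)) \le 1 - \varrho^K$, so each block acts as a single averaging step with effective parameter $\varrho^K$.

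Next I would invoke the sub-multiplicativity of $\delta$ established in the proof of Theorem~\ref{thm:nonsplit}. Writing the full product over $m$ complete blocks as $B(m)\cdots B(1)$, we get $\delta\bigl(W(mK:1)\bigr) \le (1-\varrho^K)^m$. Using $1 - a \le e^{-a}$ and $\delta(x(0)) \le 1$, the seminorm drops below $\varepsilon$ once $m \ge \varrho^{-K}\log\frac1\varepsilon$, which corresponds to $mK \ge K\varrho^{-K}\log\frac1\varepsilon$ rounds. The residual additive term $K-1$ in the statement accounts for the fact that after the last complete block we may need to wait out a partially-completed block (the final $\delta$ can only decrease, not increase, under further stochastic multiplication, since $\delta(P) \le 1$), and for the off-by-one in aligning the block boundaries; I would state this carefully so the bound $K(1/\varrho)^K\log\frac1\varepsilon + K - 1$ comes out exactly.

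The main obstacle I anticipate is the careful bookkeeping of the block structure rather than any genuinely new idea: specifically, proving cleanly that a nonzero entry of the length-$K$ matrix product is bounded below by $\varrho^K$ requires exhibiting at least one surviving positive path and bounding its weight, and one must be careful that the $K$-nonsplit hypothesis guarantees a common \emph{incoming} neighbor in the product graph in the orientation that matches the matrix convention $W_{pq} = w_{qp}$. The rest is a direct transcription of the Theorem~\ref{thm:nonsplit} argument with $\varrho$ replaced by $\varrho^K$ and rounds counted in blocks of $K$.
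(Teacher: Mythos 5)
Your proposal is correct and follows essentially the same route as the paper's own proof: grouping the round matrices into blocks of $K$, observing that each block $W(\ell+K-1:\ell)$ is a stochastic matrix whose associated graph is the $K$-fold product (hence nonsplit) with positive entries bounded below by $\varrho^K$, bounding its Dobrushin coefficient by $1-\varrho^K$, and finishing by sub-multiplicativity of $\delta$ together with $\delta(P)\leq 1$ for the at most $K-1$ leftover stochastic matrices. Your extra care about the matrix--graph orientation convention and the positivity of a surviving path is exactly the bookkeeping the paper leaves implicit, so nothing is missing.
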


\begin{proof}
We repeat the beginning of the proof of Theorem~\ref{thm:nonsplit}
	and we form the matrix product $W(k:1)$.
When grouping matrices $K$ by $K$,  $W(k:1)$ turns out to be  a product
	of $\lfloor \frac{k}{K} \rfloor$ blocks of the form $W(\ell +K -1 :\ell)$
     and at most $K-1$ remaining stochastic matrices.
Each block matrix is nonsplit and 
	its positive entries are lower bounded by $\varrho^K$.
Hence for any two processes~$p,q$ there is a process~$r$ with  
	$$ \min (W_{p r}( \ell +K - 1: \ell ), W_{q r}( \ell +K - 1: \ell )) \ge \varrho^K \,.$$
It follows that the coefficient of ergodicity $\delta$ of each block  satisfies
	$$
	\delta \big( W( \ell +K - 1: \ell) \big)  \leq 1 - \varrho^K .
	$$
By the sub-multiplicativity of~$\delta$ and since  $\delta(P)\leq 1$ when 
	$P$ is a stochastic matrix, 
     we obtain that 
	$$ \delta \left( W( k:1)\right) \leq \left (1 - \varrho^{K} \right)^{\lfloor k/K\rfloor} \, .$$
Since $\delta \left ( x(0) \right )\leq 1$, it follows  that 
     if $k \geq K \left (\frac{1 }{\varrho}\right )^{K} \log \frac{1}{\varepsilon} + \, K-1$,
     then $ \delta \left( x(k)\right) \leq\varepsilon$.
We conclude as in the proof of Theorem~\ref{thm:nonsplit}.
\end{proof}

\subsection{Coordinated network model}\label{sec:coordinated}

We begin by recalling some basic notions on directed graphs.
A directed graph~$G$ is said to be $p\,$-{\em rooted}, for some node $p$,
 	if for every node there exists a directed path  terminating at this node and originating at~$p$. 
Such a node~$p$ is called a {\em root of} $G$, and $R_G$ denotes the  set of roots in~$G$.
If~$R_G$ is non-empty, then $G$ is said to be {\em rooted}.

\begin{prop}\label{prop:rooted}
Let   $S$ be a non-empty set of nodes of a directed graph~$G$.
If $S$ has no incoming link, then $S$ contains all the roots of $G$, i.e., 
	$$ \In_S(G) \subseteq  S \ \Rightarrow \  R_G \subseteq S  \, .$$
\end{prop}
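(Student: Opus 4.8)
The plan is to prove the statement directly: fix an arbitrary root $p \in R_G$ and show that $p \in S$. Since $S$ is non-empty, I would first pick some node $s \in S$. Because $p$ is a root, the definition of $p$-rootedness guarantees a directed path $p = v_0 \to v_1 \to \dots \to v_m = s$ terminating at $s$ and originating at $p$. The whole argument then rests on walking this path \emph{backwards} from $s$ to $p$ and showing that every vertex it visits already lies in $S$.

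The key step is a downward induction on the index $i$ showing $v_i \in S$ for every $i$ from $m$ down to $0$. The base case $v_m = s \in S$ holds by the choice of $s$. For the inductive step, suppose $v_i \in S$ with $i \ge 1$; the edge $(v_{i-1}, v_i) \in E(G)$ makes $v_{i-1}$ an incoming neighbor of $v_i$, so $v_{i-1} \in \In_{\{v_i\}}(G) \subseteq \In_S(G)$. The hypothesis $\In_S(G) \subseteq S$ then forces $v_{i-1} \in S$. Iterating down to $i = 1$ yields $v_0 = p \in S$, which is exactly what we want. (The degenerate case $m = 0$, i.e.\ $p = s$, is immediate.)

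Equivalently, one can read the hypothesis as a closure property: $\In_S(G) \subseteq S$ says that $S$ is closed under taking predecessors, so no directed path can enter $S$ from outside $S$. Indeed, the first vertex of such a path lying in $S$ would have its immediate predecessor simultaneously outside $S$ and, being an incoming neighbor of an $S$-node, inside $\In_S(G) \subseteq S$, a contradiction. Hence a node outside $S$ cannot reach any node of $S$, and since $S \neq \emptyset$ such a node fails to be a root; contraposition again gives $R_G \subseteq S$.

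I do not expect a genuine obstacle here, as the statement is essentially a reachability argument. The only points that demand care are the correct use of edge orientation — the predecessor along a directed path is an \emph{incoming} neighbor, matching the definition of $\In_S(G)$ — and the non-emptiness of $S$, which supplies the endpoint $s$ and thereby anchors the induction.
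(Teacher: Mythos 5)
Your proof is correct. Note that the paper itself states this proposition without any proof, treating it as an elementary fact about reachability; your backward induction along a directed path from a root $p$ to a node $s \in S$ (using that the predecessor of an $S$-node is an incoming neighbor, hence in $\In_S(G) \subseteq S$) is precisely the standard argument the paper implicitly relies on, and it handles the two delicate points correctly: edge orientation and the use of non-emptiness of $S$ to anchor the induction.
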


The {\em condensation\/} of a directed graph~$G$, denoted by $G^*$, is the directed graph of 
	the strongly connected components of $G$.
Clearly a directed graph~$G$ is rooted if and only if its condensation $G^*$ is.
Using that~$G^*$  is acyclic, we  can show the following characterization of rooted graphs.
 
\begin{prop}\label{prop:rootedbis}
A directed graph~$G$ is rooted if and only if its condensation~$G^*$ has a sole node without
        incoming neighbors.
\end{prop}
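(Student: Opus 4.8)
The plan is to reduce everything to the condensation $G^*$, which is acyclic, and then establish the equivalent statement for acyclic graphs: an acyclic directed graph is rooted if and only if it has a unique node without incoming neighbors. Since it was already observed that $G$ is rooted if and only if $G^*$ is rooted, and since a ``node without incoming neighbors'' of the acyclic graph $G^*$ is exactly a source, this reduction is immediate and lets me work entirely inside $G^*$. Two standing facts will be used: a finite acyclic directed graph always has at least one source, and a singleton $\{s\}$ formed by a source satisfies $\In_{\{s\}}(G^*) \subseteq \{s\}$ — indeed $\In_{\{s\}}(G^*) = \emptyset$, because $G^*$, being acyclic, carries no self-loop.

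For the forward direction I would assume $G^*$ rooted, so $R_{G^*} \neq \emptyset$, and pick any source $s$ of $G^*$. Then $\{s\}$ has no incoming link, so Proposition~\ref{prop:rooted} applied to $S = \{s\}$ gives $R_{G^*} \subseteq \{s\}$. If a second distinct source $s'$ existed, the same reasoning would yield $R_{G^*} \subseteq \{s'\}$, whence $R_{G^*} \subseteq \{s\} \cap \{s'\} = \emptyset$, contradicting $R_{G^*} \neq \emptyset$. Hence the source is unique.

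For the converse I would assume $G^*$ has a unique source $s$ and show that $s$ is a root. Given an arbitrary node $v$, I build a backward walk $v = v_0 \leftarrow v_1 \leftarrow v_2 \leftarrow \cdots$, choosing $v_{i+1}$ to be an incoming neighbor of $v_i$ as long as $v_i$ is not a source. Because $G^*$ is finite and acyclic, this walk cannot revisit a node and must therefore terminate, necessarily at a source; by uniqueness that source is $s$. Reversing the walk yields a directed path from $s$ to $v$, so $v$ is reachable from $s$. As $v$ is arbitrary, $s$ is a root and $G^*$ is rooted, completing the equivalence.

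The only genuine subtlety — and the step I would treat with care — is the interaction with self-loops. The ambient communication graphs carry a self-loop at every node, but the condensation $G^*$ is acyclic and hence loop-free, so a source of $G^*$ really has an empty in-neighborhood and the hypothesis $\In_{\{s\}}(G^*) \subseteq \{s\}$ of Proposition~\ref{prop:rooted} is met. I would also make explicit that the backward walk in the converse terminates at a source rather than stalling, which is precisely where finiteness and acyclicity enter.
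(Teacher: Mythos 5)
Your proof is correct and follows the route the paper intends: the paper states Proposition~\ref{prop:rootedbis} without a written proof, remarking only that it uses the acyclicity of~$G^*$, and your argument --- reducing to the condensation, applying Proposition~\ref{prop:rooted} to a source of~$G^*$ to pin down $R_{G^*}$ and rule out a second source, and running a terminating backward walk in the finite acyclic graph to show the unique source is a root --- is exactly the natural completion of that sketch. Your explicit check that a source~$s$ of the loop-free graph~$G^*$ satisfies $\In_{\{s\}}(G^*) \subseteq \{s\}$, so that Proposition~\ref{prop:rooted} (stated for arbitrary directed graphs) genuinely applies, is also the right point to be careful about.
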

\noindent As a consequence of the above proposition, a nonsplit directed graph is rooted.

%

\medskip
Intuitively, while communication graphs remain $p$-rooted, process~$p$ gathers the 
	values in its strongly connected component, computes  some weighted average value,
	and attempts to impose this value to the rest of the processes.
In other words, its particular position in the network makes~$p$ 
	to play the role of {\em network coordinator\/} in  any averaging algorithm.
Accordingly, we define a {\em coordinated network model\/} as a network model 
	in which each communication graph is rooted.

From the above discussion, it is easy to grasp why in the particular 
	case of a steady coordinator, all processes converge to a common value and so achieve approximate 
	consensus when running an averaging algorithm.
We now  show that approximate consensus is actually achieved even when coordinators
	change over time. 
For that, we begin with the following elementary lemma.

\begin{lem}\label{lem:productrooted}
For each system with $n$ processes, any coordinated network model  is $(n-1)$-nonsplit.
\end{lem}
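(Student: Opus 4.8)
The plan is to prove the equivalent statement that the product $P = G_1 \circ \cdots \circ G_{n-1}$ of any $n-1$ rooted communication graphs on $[n]$ is nonsplit. Unwinding the definition, it suffices to show that for every pair of distinct nodes $p,q$ there is a single node $r$ with $(r,p),(r,q)\in E(P)$, i.e.\ one source reaching both $p$ and $q$ along the product. I would phrase everything in terms of \emph{backward} reachable sets rather than forward ones, precisely because that is the form in which Proposition~\ref{prop:rooted} can be applied directly. Concretely, for a target $w$ I would set $T_{n-1}(w)=\{w\}$ and $T_{j-1}(w)=\In_{T_j(w)}(G_j)$ for $j=n-1,\dots,1$, so that $T_0(w)$ is exactly the set of in-neighbors of $w$ in $P$, and nonsplitness becomes the assertion that $T_0(p)\cap T_0(q)\neq\emptyset$ for all $p,q$.

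Two facts would drive the argument. First, since every $G_j$ has a self-loop at each node, $T_{j-1}(w)\supseteq T_j(w)$, so the sets are nested and grow weakly as $j$ decreases. Second, Proposition~\ref{prop:rooted} says $\In_S(G_j)\subseteq S$ implies $R_{G_j}\subseteq S$; contrapositively, whenever the current set does \emph{not} contain the root set $R_{G_j}$, the self-loops force the backward step to be \emph{strict}, $|T_{j-1}(w)|\ge |T_j(w)|+1$. These are the only structural inputs I expect to need.

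The heart of the proof is a counting argument run on the pair $(p,q)$ at once. Assume toward a contradiction that $T_0(p)\cap T_0(q)=\emptyset$; by nestedness this propagates to $T_j(p)\cap T_j(q)=\emptyset$ at every layer $j$. Now at each backward step the root set $R_{G_j}$ is nonempty (as $G_j$ is rooted) and therefore cannot be contained in both of the disjoint sets $T_j(p)$ and $T_j(q)$, so at least one of them fails to contain $R_{G_j}$ and hence grows strictly. Thus $|T_{j-1}(p)|+|T_{j-1}(q)|\ge |T_j(p)|+|T_j(q)|+1$ at each of the $n-1$ steps, driving the total from $2$ up to at least $n+1$, which contradicts $|T_0(p)|+|T_0(q)|\le n$ for disjoint subsets of $[n]$. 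Hence $T_0(p)\cap T_0(q)\neq\emptyset$, and $P$ is nonsplit.

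The step I expect to require the most care is isolating the correct invariant: the clean contradiction only materializes once one tracks the two backward sets \emph{simultaneously} and makes the elementary but easily-overlooked observation that a nonempty root set cannot sit inside two disjoint sets, so some side must grow each round. Getting the bookkeeping right — that there are exactly $n-1$ graphs, hence $n-1$ strict-growth increments against a capacity of $n$ — is what makes the $(n-1)$ in the statement tight, and I would take some care to confirm the layer-to-graph indexing in $T_{j-1}(w)=\In_{T_j(w)}(G_j)$ matches the direction of the product $\circ$.
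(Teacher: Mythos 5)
Your proof is correct and is essentially the paper's own argument: your sets $T_j(w)$ are exactly the paper's $S_w(k)$ up to reversing the order of the graph sequence, and both proofs combine the self-loop monotonicity of backward-reachable sets, Proposition~\ref{prop:rooted}, and a counting contradiction on the disjoint in-neighborhoods of $p$ and $q$. The only difference is bookkeeping: the paper pigeonholes to find one round where $S_p(\ell)\cup S_q(\ell)$ stops growing and then applies Proposition~\ref{prop:rooted} positively (both sets would have to contain the nonempty root set), whereas you apply it contrapositively at every round to force $\lvert T_{j-1}(p)\rvert+\lvert T_{j-1}(q)\rvert\geq \lvert T_j(p)\rvert+\lvert T_j(q)\rvert+1$, overflowing the capacity $n$ after $n-1$ steps --- an equivalent, slightly more streamlined finish.
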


\begin{proof}
Let ~$H_1, \dots, H_{n-1}$ be a sequence of $n-1$ communication graphs,
	each of which is rooted.
For each process $p \in [n]$ and each index $ k \in \{0, \dots, n-1\}$,  we define the sets~$S_p(k)$ by  
	\begin{equation}\label{eq:recSp}
         S_p(0) = \{ p \} \ \text{ and } \ 
         S_p(k) = \In_{S_p(k-1)}  (H_k )   \text{ for } k \in \{1, \dots, n-1\} \,.
         \end{equation}
We easily check that for any $k \in \{1, \dots, n-1\} $, 
	\begin{equation}\label{eq:Sp}
         S_p(k) = \In_p(H_k \circ \dots \circ H_1 ) \, .
         \end{equation}
Because of the self-loops at all nodes in communication graphs,
	 $S_p(k) \subseteq S_p(k+1) $.
Hence none of the sets $S_p(k)$ is empty. 

We now show that for any processes $p,q \in [n]$,
\begin{equation}\label{eq:intersect}
S_p(n-1) \cap S_q(n-1) \neq \emptyset \,.
\end{equation}
If $p=q$, then~(\ref{eq:intersect}) trivially holds.
Otherwise, assume by contradiction that \eqref{eq:intersect} does not hold;
	it follows that for any index $k  \in \{ 0, \dots, n-1\}$, the sets $S_p(k)$ and  $S_q(k)$ are disjoint.
	
Let us consider the  sequences $ S_p(0) \subseteq \dots  \subseteq S_p(n -1)$,
	$S_q(0) \subseteq  \dots  \subseteq S_q( n -1) $, and 
	$$ S_p(0) \cup S_q (0) \subseteq  \dots  \subseteq S_p( n -1) \cup S_q( n -1) \, .$$
Because  $ |S_p(0) \cup S_q(0)| \ge 2 $ if $p\neq q$ and
	  $ |S_p(n-1) \cup S_q( n-1)| \leq  n $, the latter sequence cannot be strictly
	increasing.
Therefore there exists some index $\ell   \in \{ 0, \dots, n-2\}$
	such that $$ S_p(\ell   ) \cup S_q( \ell  ) = S_p(\ell   +1) \cup S_q(\ell   +1)\, .$$
By assumption, we have  $S_p( \ell   ) \cap S_q( \ell   ) = \emptyset$ and $S_p(\ell   +1 ) \cap S_q( \ell   +1 ) = \emptyset$.
Hence $$ S_p(\ell   ) = S_p(\ell   +1) \mbox{ and }  S_q( \ell  )  = S_q(\ell   +1)\, .$$
By (\ref{eq:recSp}) and Proposition~\ref{prop:rooted}, both $S_p(\ell   ) $ and $ S_q( \ell  )$ contain the
       nonempty set of  roots of $H_{\ell   +1}$, a contradiction to the disjointness assumption. 
Thus~(\ref{eq:intersect}) follows. 

\noindent Because of (\ref{eq:Sp}), this proves that the directed graph $ H_{n-1} \circ \dots \circ H_1  $ is nonsplit.
\end{proof}

We now apply Corollary~\ref{cor:nonsplit} to obtain  the following result.

\begin{thm}\label{thm:coord}
In a coordinated network model of $n$ processes, every averaging algorithm with parameter~$\varrho$
	achieves $\varepsilon$-consensus in 
	$ \left (\frac{1 }{\varrho}\right )^{n} n \log \frac{1}{\varepsilon}  + \, n -1 $ rounds.
In particular, the equal neighbor averaging algorithm achieves $\varepsilon$-consensus in 
	$O \left ( n ^{n  + 1} \log \frac{1}{\varepsilon} \right )$ rounds. 
\end{thm}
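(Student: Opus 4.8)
The plan is to obtain this theorem as an immediate consequence of the two preceding results, so that essentially no new argument is needed—only a reconciliation of the constants. First I would invoke Lemma~\ref{lem:productrooted}, which states that any coordinated network model of $n$ processes is $(n-1)$-nonsplit: every product of $n-1$ of its (rooted) communication graphs is nonsplit. Since rooted communication graphs carry a self-loop at each node, they in particular constitute a generalized network model, and therefore Corollary~\ref{cor:nonsplit} is applicable to them.

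Next I would apply Corollary~\ref{cor:nonsplit} with the choice $K = n-1$. This yields directly that every averaging algorithm with parameter $\varrho$ achieves $\varepsilon$-consensus in $(n-1)\left(\tfrac{1}{\varrho}\right)^{n-1}\log\frac{1}{\varepsilon} + (n-1) - 1$ rounds, that is, in $(n-1)\left(\tfrac{1}{\varrho}\right)^{n-1}\log\frac{1}{\varepsilon} + n - 2$ rounds. It then remains to loosen this expression into the form stated in the theorem. Because $\varrho \in\, ]0,1]$ we have $\tfrac{1}{\varrho} \ge 1$, so that $\left(\tfrac{1}{\varrho}\right)^{n-1} \le \left(\tfrac{1}{\varrho}\right)^{n}$ and $n-1 \le n$, while trivially $n-2 \le n-1$. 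By monotonicity these comparisons absorb the bound above into the asserted $\left(\tfrac{1}{\varrho}\right)^{n} n \log\frac{1}{\varepsilon} + n - 1$. For the equal neighbor averaging algorithm I would simply substitute its parameter $\varrho = 1/n$, established earlier: then $\left(\tfrac{1}{\varrho}\right)^{n} n = n^{n}\cdot n = n^{n+1}$, and the round count becomes $n^{n+1}\log\frac{1}{\varepsilon} + n - 1 = O\!\left(n^{n+1}\log\frac{1}{\varepsilon}\right)$.

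The point worth noting is that the genuine difficulty is already discharged in Lemma~\ref{lem:productrooted}, whose root-containment argument (via Proposition~\ref{prop:rooted}) shows that two disjoint expanding in-neighborhoods cannot both avoid the nonempty root set across all $n-1$ rounds, forcing a common incoming neighbor in the iterated product. Given that lemma together with the seminorm machinery behind Corollary~\ref{cor:nonsplit}, the present theorem poses no real obstacle: the only task is the harmless bookkeeping of folding the $(n-1)$ factors into the slightly loosened exponential bound, and the specialization to $\varrho = 1/n$. I therefore expect the proof to consist of one sentence invoking the lemma, one applying the corollary with $K=n-1$, and one performing the constant estimate and the equal-neighbor substitution.
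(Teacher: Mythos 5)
Your proposal is correct and is exactly the paper's own argument: the paper derives Theorem~\ref{thm:coord} by applying Corollary~\ref{cor:nonsplit} with $K=n-1$ to a coordinated network model, which Lemma~\ref{lem:productrooted} shows to be $(n-1)$-nonsplit, and then loosens $(n-1)\left(\frac{1}{\varrho}\right)^{n-1}\log\frac{1}{\varepsilon}+n-2$ to $n\left(\frac{1}{\varrho}\right)^{n}\log\frac{1}{\varepsilon}+n-1$ and substitutes $\varrho=1/n$ for the equal neighbor case. Your bookkeeping of the constants and the observation that the real work lives in Lemma~\ref{lem:productrooted} match the paper's presentation precisely.
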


\begin{cor}\label{cor:solvac}
The approximate consensus problem is solvable in any coordinated network model.
\end{cor}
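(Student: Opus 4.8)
The plan is to obtain the corollary as an immediate consequence of Theorem~\ref{thm:coord}. That theorem already furnishes Validity and $\varepsilon$-Agreement for any averaging algorithm run in a coordinated network model, so the only thing left to supply is a decision rule guaranteeing Termination, the decision time having been deliberately left unspecified in the definition of averaging algorithms.

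Concretely, I would fix the equal neighbor averaging algorithm, which by~(\ref{eq:EN}) is an averaging algorithm with parameter $\varrho = 1/n$, and treat the a priori specified precision $\varepsilon > 0$ as a parameter known to every process. Theorem~\ref{thm:coord} then provides an explicit round number
$$ T = \left (\frac{1}{\varrho}\right )^{n} n \log \frac{1}{\varepsilon} + n - 1 , $$
rounded up to an integer, such that for \emph{every} communication pattern with all graphs in the coordinated network model ${\cal N}$ the execution satisfies $\delta\big(x(T)\big) \leq \varepsilon$. I would then equip the algorithm with the decision rule that each process $p$ sets $dec_p := x_p(T)$ at round $T$ and never changes it afterwards. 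Validity holds at every round by definition of an averaging algorithm, so the decided value lies in the range of the initial values; $\varepsilon$-Agreement for the decided values $x_p(T)$ is exactly the inequality $\delta\big(x(T)\big) \leq \varepsilon$; and all processes decide at round $T$, which gives Termination. Since $\varepsilon$ was arbitrary, this yields an algorithm that solves approximate consensus in ${\cal N}$.

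The single point requiring care --- and the only candidate for a genuine obstacle --- is Termination, namely that the decision round $T$ be well defined and common to all processes without any knowledge of the communication pattern. This is precisely where the uniformity of the bound in Theorem~\ref{thm:coord} is essential: $T$ depends only on $n$ (known to every process), on the built-in parameter $\varrho = 1/n$, and on the a priori value $\varepsilon$, but not on the actual graphs $G(k)$. Hence every process computes the same $T$ locally and decides simultaneously, so no coordination, memory, or process identifiers are needed --- confirming the paper's theme that a plain averaging algorithm already captures solvability.
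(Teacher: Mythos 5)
Your proposal is correct and matches the paper's own (largely implicit) justification: the corollary is immediate from Theorem~\ref{thm:coord}, with the decision rule supplied exactly as the paper prescribes in its remark that ``the decision time immediately follows from the number of rounds that is proven to be sufficient to reach $\varepsilon$-Agreement.'' Your care about Termination --- that the decision round $T$ depends only on $n$, $\varrho$, and the a priori specified $\varepsilon$, not on the communication pattern --- is precisely the point that makes this work, and it is the same reasoning the paper relies on.
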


Interestingly Lemma~\ref{lem:productrooted} corresponds to a {\em uniform translation} in the
	Heard-Of model~\cite{CS09} that transforms  each block of $n-1$ consecutive rounds with
	rooted communication graphs into one  macro-round with a  nonsplit  communication graph.
If each process applies an equal neighbor averaging procedure only at the end of each macro-round instead of
	applying it round by round, the resulting distributed algorithm, which is no more an averaging 
	algorithm, achieves $\varepsilon$-consensus in only
	$ O \left ( n ^{2} \log \frac{1}{\varepsilon} \right ) $ rounds.
	 
\subsection{Necessity for the coordinated model to solve approximate consensus}

As we now show, there exists an algorithm, whether or not it is an averaging algorithm, achieving
	approximate consensus in  a network model ${\cal N}$ only if  ${\cal N}$ is coordinated.

\begin{thm}\label{thm:noncoord}
In any non-coordinated network model, the approximate consensus problem is not solvable.  
\end{thm}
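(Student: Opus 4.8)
The plan is to prove the contrapositive: if a network model ${\cal N}$ contains a communication graph $G$ that is \emph{not} rooted, then no algorithm can solve approximate consensus in ${\cal N}$. By Proposition~\ref{prop:rootedbis}, a non-rooted graph $G$ has a condensation $G^*$ with \emph{at least two} distinct source nodes (strongly connected components with no incoming neighbors). Pulling these back to $G$, I obtain two disjoint, nonempty sets $A, B \subseteq [n]$, each of which is ``closed'' in the sense that $\In_A(G) \subseteq A$ and $\In_B(G) \subseteq B$ (a source component receives messages only from within itself). The key structural fact driving the impossibility is that these two sets can evolve in complete isolation from one another.

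First I would set up the adversarial communication pattern: let every round use this same fixed graph $G$, so $G(k) = G$ for all $k$. Then I would construct an indistinguishability argument in the style of the classical consensus impossibility proofs. Consider an execution $\mathcal{E}_0$ in which every process starts with initial value $0$, and an execution $\mathcal{E}_1$ in which every process starts with $1$. By Validity, in $\mathcal{E}_0$ every process must eventually decide $0$, and in $\mathcal{E}_1$ every process must decide $1$ (the range of initial values is the single point $\{0\}$, respectively $\{1\}$). Now I would consider the ``hybrid'' execution $\mathcal{E}_*$ in which processes in $A$ start with $0$ and all remaining processes start with $1$. Because $\In_A(G) \subseteq A$, the processes in $A$ never receive any message from outside $A$; hence their entire sequence of states in $\mathcal{E}_*$ is \emph{identical} to their states in $\mathcal{E}_0$, and so every process in $A$ decides $0$. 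Symmetrically, if I instead route the analysis through $B$ (or through a second hybrid), the closedness $\In_B(G)\subseteq B$ forces the processes of $B$ to behave exactly as in an all-$1$ start, deciding~$1$.

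The crux is to get a single execution in which one closed set is pinned to decision $0$ and a disjoint closed set is pinned to decision $1$. I would do this by taking the hybrid $\mathcal{E}_*$ above where $A$-processes start at $0$ and $B$-processes start at $1$ (the values of processes outside $A \cup B$ are irrelevant and can be set arbitrarily, say to $1$). Isolation of $A$ gives: the view of $A$-processes in $\mathcal{E}_*$ equals their view in an execution where they all started at $0$ and received only $0$'s, so they decide some value that Validity constrains to equal $0$ (since within $A$ the only initial value present and the only values ever communicated are $0$). Likewise isolation of $B$ gives that $B$-processes decide $1$. Thus two processes decide values $0$ and $1$, violating $\varepsilon$-Agreement for any $\varepsilon < 1$. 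I expect the main obstacle to be the careful bookkeeping of the Validity argument on the isolated subsystems: I must argue that a closed set $A$, receiving messages only from within itself and all initialized to the same value $v$, constitutes a self-contained subexecution whose communication pattern and initial states coincide with those of the all-$v$ execution restricted to $A$, so that Validity applied to that genuine subexecution forces the decision to be exactly $v$. Formalizing ``restriction of an execution to a closed set is itself a legitimate execution of the same algorithm'' is the delicate step, but it follows directly from the communication-closed, locally-computed nature of the transition function together with $\In_A(G) \subseteq A$.
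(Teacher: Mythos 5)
Your proof is correct and is essentially the paper's own argument: the same fixed non-rooted graph $G$ repeated every round, the same two source components of the condensation (your $A,B$ are the paper's $P,Q$), and the same three executions (all-$0$, all-$1$, hybrid) compared via indistinguishability on the closed sets, yielding decisions $0$ and $1$ in the hybrid and contradicting $\varepsilon$-Agreement for $\varepsilon<1$. The ``delicate step'' you worry about at the end is unnecessary: your first formulation --- that $A$-processes have identical state sequences in $\mathcal{E}_*$ and $\mathcal{E}_0$ (by induction on rounds, using $\In_A(G)\subseteq A$) --- is already a complete argument between two full $n$-process executions and needs no formalization of restricted subexecutions, which is exactly how the paper argues.
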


\begin{proof}
 We proceed by contradiction and we assume that there exists an algorithm that solves
	approximate consensus in a non-coordinated network model ${\cal N}$.
	
Let $G$ be a communication graph in  ${\cal N}$ that is not rooted.
Then by Proposition~\ref{prop:rootedbis}, the condensation~$G^*$  has at least two nodes without 
	incoming neighbors.
Let $P$ and $Q$ denote the set of processes  corresponding to two such nodes in $G^*$, i.e., 
	to two strongly connected components of $G$ without incoming links.
	
Then we consider three executions of the algorithm which share the same communication pattern,
	namely the sequence with the fixed graph $G$, and which differ only in the initial state:
	in the first execution all processes start with 0, in the second one they all start with 1, and in 
	the third every  process in $P$ starts with 0 while all the others --- including processes in $Q$ ---
	start with 1.
By Validity, all processes finally decide 0 and 1 in the first and the second execution, respectively.
Moreover the first and the third executions are indistinguishable from the viewpoint of each process in $P$;
	in particular each process in $P$  makes the same decision, namely 0, in both of these executions.
Similarly each process in $Q$  makes the same decision, namely 1, in the second and the third execution.
Therefore the third execution violates $\varepsilon$-agreement as soon as $\varepsilon<1$, a contradiction
	with the assumption that  the algorithm solves approximate consensus.
\end{proof}	

\begin{cor}\label{cor:characterization}
The approximate consensus problem is solvable in a network model ${\cal N}$ if and only if ${\cal N}$ is 
	a coordinated model.
\end{cor}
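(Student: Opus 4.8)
The plan is to observe that this corollary is nothing more than the conjunction of the two results immediately preceding it, so the proof amounts to unpacking the biconditional into its two implications and matching each one to an already-established statement. I would split ``solvable $\iff$ coordinated'' into the forward implication (coordinated $\Rightarrow$ solvable) and the reverse implication (solvable $\Rightarrow$ coordinated), and handle them separately.

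For the forward direction, there is nothing to do beyond invoking Corollary~\ref{cor:solvac}, which already asserts that approximate consensus is solvable in \emph{any} coordinated network model; that corollary is itself obtained from Theorem~\ref{thm:coord} via Lemma~\ref{lem:productrooted} and Corollary~\ref{cor:nonsplit}, but all of that machinery is now encapsulated and I may simply cite it. For the reverse direction, I would argue by contraposition: if ${\cal N}$ is \emph{not} coordinated, then Theorem~\ref{thm:noncoord} says approximate consensus is not solvable in ${\cal N}$; equivalently, solvability in ${\cal N}$ forces ${\cal N}$ to be coordinated. Combining the two implications yields the equivalence.

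Honestly, there is no genuine obstacle in this final step — the real work was done earlier, in the indistinguishability argument of Theorem~\ref{thm:noncoord} (which exploits the two source strongly connected components $P$ and $Q$ of a non-rooted condensation) and in the contraction estimate underlying Theorem~\ref{thm:coord}. The only thing to be careful about is bookkeeping the direction of each implication so that the contrapositive of Theorem~\ref{thm:noncoord} is lined up correctly against the \emph{only if} half of the statement, rather than accidentally conflating it with the \emph{if} half supplied by Corollary~\ref{cor:solvac}. Thus I expect the proof to be a one- or two-sentence assembly: ``The \emph{if} part is Corollary~\ref{cor:solvac}; the \emph{only if} part is the contrapositive of Theorem~\ref{thm:noncoord}.''
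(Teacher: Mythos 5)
Your proposal is correct and matches the paper's (implicit) argument exactly: the corollary is stated immediately after Theorem~\ref{thm:noncoord} precisely because the \emph{if} direction is Corollary~\ref{cor:solvac} and the \emph{only if} direction is the contrapositive of Theorem~\ref{thm:noncoord}. No further work is needed.
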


At the risk of oversimplifying, one might say that understanding averaging algorithms
	is understanding backward products of stochastic matrices,
	$$W(k:1) = W(k)W(k-1)\dots W(1) $$
	as $k$ grows to infinity.
The graph associated to a stochastic matrix $W$ of size $n \times  n$ is the directed graph $G(W)$
	with the set of vertices equal to $[n]$  and a set of directed edges~$E$ defined by 
	$$ (p,q) \in E \Leftrightarrow W_{p q} >0 \, .$$
The  communication graph at round~$k$ thus coincides with the graph associated to~$W(k)$. 
Following the terminology in~\cite{BO13}, a set ${\cal P}$ of stochastic matrices is a {\em consensus set} 
	if every infinite backward product of matrices from ${\cal P}$ converges to a rank one matrix.
When limiting ourselves to averaging algorithms, Corollary~\ref{cor:characterization} then reduces to a
	necessary and sufficient condition on a compact set of stochastic matrices with positive diagonals
	 to form a consensus set.
In particular, we obtain a graph-based characterization of stochastic matrices with positive diagonals 
	whose powers converge to a rank one matrix.

\begin{cor}\label{cor:characterizationmath}
A compact set ${\cal P}$ of stochastic matrices with positive diagonals is a consensus set if and only if 
	the directed graph associated to each matrix in ${\cal P}$ is rooted.
\end{cor}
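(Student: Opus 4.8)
The plan is to argue both implications at the level of matrix products, using throughout the identity $G(AB)=G(A)\circ G(B)$ valid for nonnegative matrices: since $(AB)_{pq}=\sum_r A_{pr}B_{rq}$ is a sum of nonnegative terms, it is positive exactly when some $r$ has $A_{pr}>0$ and $B_{rq}>0$, which is the defining condition for an edge of the graph product. By induction this gives $G\big(W(k{:}1)\big)=G(W(k))\circ\cdots\circ G(W(1))$, so that the associated graph of a backward product is the $\circ$-product of the associated graphs taken in the same order. This is the bridge that lets me transport the purely graph-theoretic Lemma~\ref{lem:productrooted} to matrix products.

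For necessity I would suppose that some $W\in\mathcal{P}$ has a non-rooted graph and run the constant sequence $W,W,\dots$, reproducing the obstruction of Theorem~\ref{thm:noncoord} at the vector level. By Proposition~\ref{prop:rootedbis} the condensation of $G(W)$ has two distinct sources, i.e.\ two strongly connected components $P,Q$ with no incoming links; together with the self-loops this gives $\In_P(G(W))=P$ and $\In_Q(G(W))=Q$. Taking $x(0)$ equal to $0$ on $P$ and $1$ elsewhere, the update $x(k)=Wx(k-1)$ keeps every $P$-coordinate at $0$ and every $Q$-coordinate at $1$ forever, so $W^k x(0)$ never approaches a constant vector and $W^k$ cannot converge to a rank one matrix, contradicting that $\mathcal{P}$ is a consensus set.

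The substantive direction is sufficiency, and here I expect the main obstacle: compactness does \emph{not} furnish a uniform lower bound $\varrho$ on the positive entries of matrices in $\mathcal{P}$ (entries may tend to $0$ while each graph stays rooted), so Theorem~\ref{thm:coord} cannot be quoted directly, and indeed the claim is false without compactness (a family of strongly connected doubly-stochastic matrices whose off-diagonal weights shrink geometrically keeps every graph rooted yet has an infinite product converging to a full-rank matrix). I would resolve this through the coefficient of ergodicity rather than through individual weights. By Lemma~\ref{lem:productrooted} and the graph-product identity, the associated graph of any block $B=W(\ell+n-2:\ell)$, a product of $n-1$ matrices of $\mathcal{P}$, is nonsplit, so every pair $p,q$ shares a common $r$ with $B_{pr},B_{qr}>0$ and hence $\delta(B)<1$. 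The map $(W_1,\dots,W_{n-1})\mapsto \delta(W_{n-1}\cdots W_1)$ is continuous on the compact set $\mathcal{P}^{\,n-1}$ and strictly below $1$ at every point, so it is bounded by some $1-c<1$. Grouping an arbitrary infinite product into blocks of length $n-1$ and invoking submultiplicativity of $\delta$ exactly as in the proof of Corollary~\ref{cor:nonsplit} then yields $\delta\big(W(k{:}1)\big)\le (1-c)^{\lfloor k/(n-1)\rfloor}\to 0$.

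It remains to upgrade $\delta\big(W(k{:}1)\big)\to 0$ to genuine convergence to a rank one matrix. For each column $j$ one has $\delta\big(W(k{:}1)\,e_j\big)\le \delta\big(W(k{:}1)\big)\to 0$, so the entries of that column become mutually equal; moreover, since $W(k+1{:}1)=W(k+1)\,W(k{:}1)$ and $W(k+1)$ is stochastic, the per-column minimum is nondecreasing and the maximum nonincreasing in $k$, so both squeeze to a common limit $\pi_j$. Hence $W(k{:}1)\to \onevec\pi^{\top}$, a rank one matrix, for every infinite product, which is precisely the consensus-set property. Closing the biconditional establishes Corollary~\ref{cor:characterizationmath}.
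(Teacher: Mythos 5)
Your proof is correct (when the ``associated graph'' is read in the information-flow orientation, on which more below), and it takes a genuinely different route from the paper --- for the simple reason that the paper gives this corollary no proof of its own: it is presented as an immediate specialization of Corollary~\ref{cor:characterization} to averaging algorithms. Your necessity direction is just the matrix-level transcription of the proof of Theorem~\ref{thm:noncoord} (two source components of the condensation frozen at $0$ and $1$), so there the two routes agree. The divergence, and the real value of your write-up, is in sufficiency: you correctly observe that the paper's implicit reduction has a gap, because Theorem~\ref{thm:coord} concerns averaging algorithms with a uniform parameter $\varrho$, and compactness of $\mathcal{P}$ does \emph{not} yield a uniform positive lower bound on the positive entries (take the $2\times 2$ matrices with rows $(1-t,\,t)$ and $(\tfrac12,\,\tfrac12)$ for $t\in[0,\tfrac12]$: compact, positive diagonals, every graph rooted, off-diagonal entries arbitrarily small). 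Your repair --- Lemma~\ref{lem:productrooted} plus $G(AB)=G(A)\circ G(B)$ gives $\delta(W_{n-1}\cdots W_1)<1$ pointwise, then continuity of $\delta$ and of matrix multiplication on the compact set $\mathcal{P}^{\,n-1}$ gives a uniform bound $1-c<1$, then sub-multiplicativity finishes as in Corollary~\ref{cor:nonsplit} --- is exactly what is needed; it isolates where compactness enters, and your non-compact counterexample correctly shows the hypothesis cannot be dropped. Likewise, the final upgrade from $\delta\big(W(k{:}1)\big)\to 0$ to convergence of $W(k{:}1)$ to a rank-one matrix (monotone column extrema squeezed by the vanishing spread) is a step the paper passes over in silence, and it is needed because the consensus-set property asks for convergence of the products themselves. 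In short: the paper's route buys brevity by leaning on Theorem~\ref{thm:coord}, but that theorem does not literally apply; yours is self-contained and sound.

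One bookkeeping caveat, which you inherit from the paper rather than introduce: with the paper's literal convention $(p,q)\in E(G(W))\Leftrightarrow W_{pq}>0$ and its notion of rootedness, your step ``nonsplit product $\Rightarrow$ there is $r$ with $B_{pr},B_{qr}>0$'' is off by a transposition (nonsplitness of $G(B)$ would give $B_{rp},B_{rq}>0$), and in fact the corollary is false under that literal reading --- already the matrix with rows $(1,0,0)$, $(0,1,0)$, $(\tfrac13,\tfrac13,\tfrac13)$ has a rooted literal graph yet its powers converge to a rank-two matrix. Everything you wrote is consistent, and the statement true, once the graph associated to $W$ is taken to be the communication graph, i.e.\ with an edge from $q$ to $p$ exactly when $W_{pq}>0$; this is plainly what the paper intends when it asserts that $G(W(k))$ ``coincides with'' $G(k)$, even though the two are transposes of one another. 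State that convention once at the start of your proof and the argument is complete.
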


\section{Time complexity of averaging algorithms}

As opposed to the approximate consensus algorithm sketched in Section~\ref{sec:coordinated}, 
	based on the translation of coordinated rounds into a nonsplit  macro-round, 
	one main advantage of averaging algorithms is that they do not require processes to have identifiers. 
Unfortunately the upper bound on the decision times of averaging algorithms in Theorem~\ref{thm:coord} is quite large, 
	namely exponential in the number of processes, while the decision times of the approximate consensus algorithm  in Section~\ref{sec:coordinated} 
	are at most quadratic.
Our goal in this section is precisely to study the time complexity of averaging  algorithms for
	a coordinated network model of  anonymous processes.

As shown in Theorem~\ref{thm:nonsplit}, the assumption of dynamic
     nonsplit networks drastically reduces the decision time  of
     averaging algorithms: for instance, the decision time of the
     equal neighbor algorithm is actually linear.
Another class of network models with polynomial decision times for
     some averaging algorithms are the bidirectional connected network
     models, i.e., those whose network model is a set of bidirectional
     and connected networks: from a result by Chazelle~\cite{Cha13},
     we derive a fixed weight averaging algorithm that achieves
     $\varepsilon$-consensus in $O(n^3 \log\frac{1}{\varepsilon})$
     rounds in these dynamic networks.
However in the non-bidirectional case, we show that the same algorithm
     may exhibit an $\Omega(2^{n/3}\log\frac{1}{\varepsilon})$
     decision time, demonstrating that a general upper bound on
     decision times in averaging algorithms has to be exponential.

\subsection{Bidirectional connected network models}

The consensus algorithm in~\cite{BRS12} with a linear decision time can be used in the context of bidirectional connected 
	networks since any process is then a root, and the set of roots is the set of all processes.
However  contrary to any  averaging algorithm, it does not tolerate
	deviations from bidirectional connected network models: for instance, it does not tolerate any link removal
	in the case of a bidirectional tree.
In other words, even if linear, the algorithm in~\cite{BRS12}  is not relevant for approximate consensus
	in the setting of dynamic networks that are bidirectional only most of the time.

Interestingly in any bidirectional connected network model with $n$
     processes, a fixed weight algorithm with parameter $\varrho$
     achieves $\varepsilon$-consensus in $O(\frac{1}{\varrho} n^2
     \log\frac{1}{\varepsilon})$ rounds for any $0 <\varepsilon
     <\varrho/n$ (see e.g., Theorem 1.6 in~\cite{Cha13}).
Hence in such network models, there exist fixed weight averaging
     algorithms that solve approximate consensus with polynomial
     decision times.
	
The proof in~\cite{Cha13} is based on some classical spectral gap arguments.
Let $W$ be any stochastic matrix; its  spectral radius is then equal to 1.
By the Perron-Frobenius theorem, the eigenvalue~1  is actually a simple 
	eigenvalue with positive eigenvectors if the matrix~$W$ is {\em primitive}, i.e., 
	the directed graph defined by
	its positive entries is strongly connected and aperiodic.
If $W$ is a primitive matrix, then  its transpose $W^T$ is also  primitive with the same spectral radius, 
	namely 1.
It follows that  1 is a simple eigenvalue of  $W^T$ with some  positive
	eigenvectors.
Hence there exists a unique positive vector $\pi$, called the {\em Perron vector of} $W$,
	such that $W^T \pi = \pi$  and  $\sum_p \pi_p = 1$.
Chazelle observes that  the Perron vectors of the stochastic matrices~$W(k)$
	associated to the execution of a fixed weight averaging algorithm with a communication
	pattern composed of bidirectional  graphs
	are constant, even though the network topology may change over time.
Using the inner product on $\IR^n$  defined by
	$$\langle x, y \rangle_{\pi} = \sum_{p=1}^n \pi_p \, x_p \, y_p $$
	where $\pi$ is the common Perron vector of the stochastic matrices~$W(k)$
	and the fact that each matrix~$W(k)$ is self-adjoint with respect to this inner product, 
	Chazelle establishes some bounds on the spectral gap of every matrix $W(k)$ when the communication 
	graph at round~$k$  is additionally connected,
	which allows him to conclude.
	
\subsection{Exponential decision time in unidirectional networks}

We develop the example given by Olshevsky and Tsitsiklis~\cite{OT11v1},
        inspired by~\cite{Chu97},
        of a strongly connected uni-directional network model and a fixed weight
	averaging algorithm that, in contrast to the bidirectional case, exhibits a
	necessarily exponentially large decision time.\footnote{We would like to thank Alex Olshevsky for
        pointing us to this example.}
More specifically, we show that the algorithm achieves $\varepsilon$-consensus in
	$\Omega(2^{n/3}\log\frac{1}{\varepsilon})$ rounds.
The example does not even require the network to be dynamic, using a time-constant network
        only.
The fixed weight averaging algorithm used in the example corresponds to the equal neighbor algorithm
        for the considered static network.

The communication graph, that we call the {\em $m$-butterfly graph}, is depicted 
	in Figure~\ref{fig:ex:exp}.
It has $n = 2m$ processes and consists of two isomorphic parts that are connected
	by a bidirectional link.
We  list the links between the processes $1,2,\dots,m$, which also determine
	the links between the processes $m+1,m+2,\dots,2m$ via the isomorphism 
	$\bar{p} = 2m - p + 1$.
The links between the processes $1,2,\dots,m$ are: (a)~the links $(p+1,p)$ for all
	$p \in [m-1] $ and (b)~the links $(1,p)$ for all $p\in[m]$.
In addition, it contains a self-loop at each process and the two 
	 links $(m,\bar{m})$ and $(\bar{m},m)$. 
Hence the $m$-butterfly graph is strongly connected.
	 
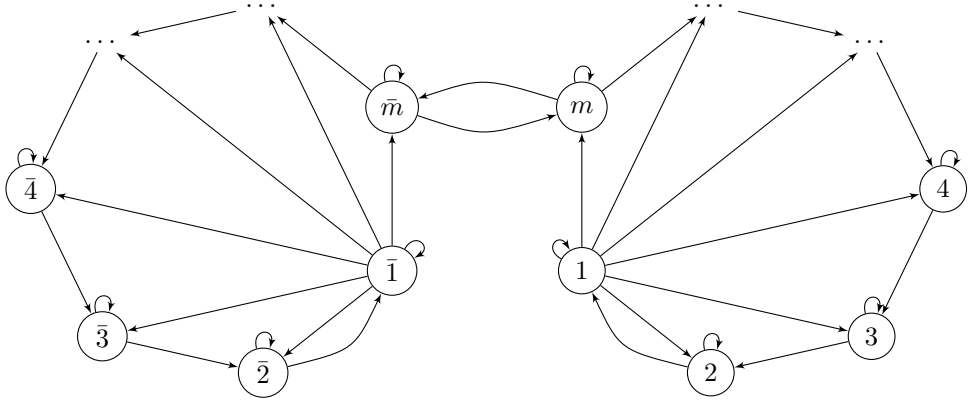
\begin{figure}
\centering
\begin{tikzpicture}[>=latex']
\node[draw, circle] (n1) at (205.71:2.5) {$1$};
\node[draw, circle] (n2) at (257.14:2.5) {$2$};
\node[draw, circle] (n3) at (308.57:2.5) {$3$};
\node[draw, circle] (n4) at      (0:2.5) {$4$};
\node               (n5) at  (51.43:2.5) {$\dots$};
\node               (n6) at (102.86:2.5) {$\dots$};
\node[draw, circle] (n7) at (154.29:2.5) {$m$};

\draw[->] (n1) -- (n2);
\draw[<-] (n2) -- (n3);
\draw[<-] (n3) -- (n4);
\draw[<-] (n4) -- (n5);
\draw[<-] (n5) -- (n6);
\draw[<-] (n6) -- (n7);
\draw[<-] (n7) -- (n1);

\draw[->] (n2) .. controls +(-1.2,0.3) .. (n1);
\draw[<-] (n3) -- (n1);
\draw[<-] (n4) -- (n1);
\draw[<-] (n5) -- (n1);
\draw[<-] (n6) -- (n1);

\begin{scope}[yscale=1,xscale=-1,shift={( 7,0)}]
\node[draw, circle] (m1) at (205.71:2.5) {$\bar1$};
\node[draw, circle] (m2) at (257.14:2.5) {$\bar2$};
\node[draw, circle] (m3) at (308.57:2.5) {$\bar3$};
\node[draw, circle] (m4) at      (0:2.5) {$\bar4$};
\node               (m5) at  (51.43:2.5) {$\dots$};
\node               (m6) at (102.86:2.5) {$\dots$};
\node[draw, circle] (m7) at (154.29:2.5) {$\bar m$};

\draw[->] (m1) -- (m2);
\draw[<-] (m2) -- (m3);
\draw[<-] (m3) -- (m4);
\draw[<-] (m4) -- (m5);
\draw[<-] (m5) -- (m6);
\draw[<-] (m6) -- (m7);
\draw[<-] (m7) -- (m1);

\draw[->] (m2) .. controls +(-1.2,0.3) .. (m1);
\draw[<-] (m3) -- (m1);
\draw[<-] (m4) -- (m1);
\draw[<-] (m5) -- (m1);
\draw[<-] (m6) -- (m1);

\end{scope}

\draw[->] (m1) edge [in=30,out=60,min distance=8] node {} (m1);
\draw[->] (m2) edge [in=75,out=105,min distance=8] node {} (m2);
\draw[->] (m3) edge [in=75,out=105,min distance=8] node {} (m3);
\draw[->] (m4) edge [in=85,out=115,min distance=8] node {} (m4);
\draw[->] (m7) edge [in=75,out=105,min distance=8] node {} (m7);

\draw[->] (n1) edge [in=120,out=150,min distance=8] node {} (n1);
\draw[->] (n2) edge [in=75,out=105,min distance=8] node {} (n2);
\draw[->] (n3) edge [in=75,out=105,min distance=8] node {} (n3);
\draw[->] (n4) edge [in=65,out=95,min distance=8] node {} (n4);
\draw[->] (n7) edge [in=75,out=105,min distance=8] node {} (n7);

\draw[->] (n7) .. controls +(-1.3, 0.4) .. (m7);
\draw[<-] (n7) .. controls +(-1.3,-0.4) .. (m7);
\end{tikzpicture}
\caption{Example of a network with exponential decision time of the equal
neighbor algorithm}
\label{fig:ex:exp}
\end{figure}
	 
\begin{thm}[\cite{OT11v1}]\label{thm:butterfly}
In the coordinated model consisting of  the $m$-butterfly graph, for any $\varepsilon >0$,
	the equal neighbor averaging algorithm does not achieve $\varepsilon$-consensus
	by round $K$ if $K = O\big( 4^{m/3}\log\frac{1}{\varepsilon} \big)$.
\end{thm}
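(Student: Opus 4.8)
Here is how I would approach Theorem~\ref{thm:butterfly}.

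\medskip
Since the network model consists of the single static $m$-butterfly graph $G$ and the algorithm is the equal neighbor averaging algorithm, every execution is driven by one fixed stochastic matrix $W$, with $W_{pq} = 1/|\In_p(G)|$ for $q \in \In_p(G)$, so that $x(k) = W^k x(0)$. Because $G$ is strongly connected and carries a self-loop at every node, $W$ is primitive: it has $1$ as a simple eigenvalue, the constant vectors as its fixed directions, and a unique positive stationary (Perron) vector $\pi$ with $\pi^{T} W = \pi^{T}$. To exhibit slow convergence I would fix the initial vector $x(0)$ equal to $0$ on processes $1,\dots,m$ and to $1$ on processes $\bar 1,\dots,\bar m$; Validity keeps all values in $[0,1]$, and the reflection $p \mapsto \bar p$ (an automorphism of $G$, hence commuting with $W$) makes $x(0) - \tfrac12\one$ antisymmetric. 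It then suffices to show that the value of some $0$-side process stays well below $\tfrac12$ --- equivalently, that $\delta(x(k))$ fails to drop below $\varepsilon$ --- until $k$ reaches the claimed order.

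\medskip
The key is the probabilistic reading of the averaging map: for $p$ on the $0$-side, $x_p(k) = \sum_{q} (W^k)_{pq}\, x_q(0) = \Pr[X_k \in \{\bar 1,\dots,\bar m\}]$, where $(X_k)$ is the chain started at $p$ that, from a node $u$, jumps to a uniformly chosen incoming neighbor in $\In_u(G)$. The two halves of the butterfly communicate only through the single pair of links $m \leftrightarrow \bar m$, so this walk can pass from the $0$-side to the $1$-side only while sitting at node $m$. I would therefore (i) treat $\{m,\bar m\}$ as a bottleneck cut and (ii) show the chain visits $m$ exponentially rarely. The latter is precisely the role of the hub: every interior node of the chain lists node $1$ among its incoming neighbors, so from anywhere the walk is reset to the low-index end of the chain within $O(1)$ expected steps, and a single excursion climbs all the way to the far node $m$ only with exponentially small probability. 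This forces the stationary mass $\pi_m$, and hence the bottleneck ratio $Q(\text{side }1,\text{side }2)/\pi(\text{side }1)$, to be of order $4^{-m/3}$.

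\medskip
Two steps then turn this into the stated bound. First, a one-step flux (conductance) estimate bounds the escape probability by an amount of order $k\cdot\Phi$ with $\Phi = \Theta(\pi_m) = \Theta(4^{-m/3})$, so a walk started on the $0$-side --- and thus $x_p(k)$ --- stays below $\tfrac14$ for all $k$ up to $\Omega(4^{m/3})$. Second, to recover the $\log\frac1\varepsilon$ factor I would sharpen this into a statement about the per-round contraction of $\delta$ along the antisymmetric mode: the relevant eigenvalue $\lambda$ of $W$ on the invariant antisymmetric subspace satisfies $\lambda \ge 1 - c\,4^{-m/3}$, whence $\delta(x(k)) \ge c'\lambda^k$ and $\delta(x(k)) \le \varepsilon$ forces $k = \Omega\!\big(4^{m/3}\log\frac1\varepsilon\big)$.

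\medskip
The \textbf{main obstacle} is the quantitative heart of the second paragraph: proving that the chain reaches the bottleneck node $m$ with exponentially small frequency, \emph{with the correct exponent} $m/3$. Because the walk is genuinely directed --- $W$ is not reversible, so there is no self-adjointness to exploit and the variational/Cheeger arguments available in the bidirectional case do not apply --- I expect to obtain $\pi_m = \Theta(4^{-m/3})$ by solving, or carefully bounding, the balance recursion $\pi^{T} W = \pi^{T}$ directly along the chain, and to control the full \emph{transient} escape probability over $K$ rounds rather than only the stationary flux. Pinning down the exact base $4^{1/3}$, as opposed to some generic exponential rate, is where the precise in-degrees along the chain and at the hub of the $m$-butterfly graph must be used.
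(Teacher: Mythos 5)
Your overall strategy --- locate the bottleneck at the single $m\leftrightarrow\bar m$ crossing, show that the associated ``backward'' random walk (which jumps to a uniform incoming neighbor) reaches node~$m$ exponentially rarely, and convert this into a geometric lower bound on $\delta(x(k))$ --- has the same skeleton as the paper's proof, merely phrased probabilistically where the paper works directly with the Perron vector and a Cheeger inequality. Your first, transient part is sound: the escape probability within $k$ rounds is $O\bigl(k\,2^{-m}\bigr)$, so $\delta(x(k))\geq 1/2$ for $k=\Omega(2^m)$. But there is a genuine gap exactly at the step that delivers the $\log\frac{1}{\varepsilon}$ factor. You assert that ``the relevant eigenvalue $\lambda$ of $W$ on the invariant antisymmetric subspace satisfies $\lambda \geq 1 - c\,4^{-m/3}$'', yet you supply no mechanism to prove it: as you yourself observe, $W$ is not reversible, so the variational/Cheeger machinery of the bidirectional case is unavailable, and the tool you propose instead --- solving the balance equations $\pi^T W = \pi^T$ --- produces only the stationary vector, which by itself carries no eigenvalue information for a non-reversible chain. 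This is precisely where the paper imports a nontrivial external ingredient: the bottleneck-to-eigenvalue inequality $\lvert\lambda\rvert \geq 1 - C\cdot\Phi(W)\cdot\log\frac{1}{\pi_{\min}}$ (obtained by combining Theorems 7.3 and 12.3 of Levin--Peres--Wilmer), applied after computing $\pi$ in closed form, and followed by a complex-seminorm argument giving $\delta(W^k)\geq\lvert\lambda\rvert^k/2$. Without this inequality or a substitute, your argument proves the statement only for moderately small $\varepsilon$: once $\log\frac{1}{\varepsilon}\gg 2^{m/3}$, the round count $K=O\bigl(4^{m/3}\log\frac{1}{\varepsilon}\bigr)$ exceeds $2^m$, the transient estimate says nothing, and only an eigenvalue (i.e., an all-$k$ geometric) bound can conclude; submultiplicativity of $\delta$ cannot convert a slow-mixing statement at one time scale into a lower bound at all later times.

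Two further corrections. First, your quantitative claim $\pi_m=\Theta(4^{-m/3})$ is wrong: the paper computes $\pi_m=\frac{3}{5\cdot 2^{m-1}}=\Theta(2^{-m})$, so the bottleneck is even smaller than you claim. The error is in the harmless direction, but it shows that your declared ``main obstacle'' --- pinning down the base $4^{1/3}$ --- is a red herring: nothing in the graph produces that base; $4^{m/3}=2^{n/3}$ is merely a clean understatement, in terms of $n=2m$, of the true rate $\approx 2^m/m$. Second, even granting the eigenvalue bound, your conclusion $\delta(x(k))\geq c'\lambda^k$ for your specific antisymmetric initial condition requires verifying that this vector has a nonvanishing component along the slow eigenvector, which may moreover be complex; the paper sidesteps both issues by proving the operator statement $\delta(W^k)\geq\lvert\lambda\rvert^k/2$ via the complex extension of $\delta$ and concluding existentially, which suffices since the theorem only needs \emph{some} initial values to violate $\varepsilon$-Agreement.
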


The complete proof we give here is based on spectral gap arguments, and therein differes from the
        proof in~\cite{OT11v1}.
Let $W$ be any stochastic matrix, and let $\pi$ denote its Perron vector.
For any subset $S \subseteq [n]$, we let 
	\begin{equation*}\label{eq:def:bottleneck:set}
	\Phi_S(W) = \pi(S)^{-1} \sum_{p\in S} \sum_{q\notin S} \pi_q W_{pq}
	\end{equation*}
	where $\pi(S) = \sum_{r\in S} \pi_r$.
Then the {\em Cheeger constant\/} of the matrix~$W$ is defined as the minimal
	$\Phi_S(W)$ of all nonempty~$S$ with $\pi(S)\leq 1/2$, i.e.,
	\begin{equation*}\label{eq:def:bottleneck}
	\Phi(W) = \min_{\substack{S\subseteq[n]\\0<\pi(S)\leq1/2}} \Phi_S(W) \, .
	\end{equation*}
Combining Theorem 7.3 and Theorem 12.3 in~\cite{LPW09}, we obtain 
	the following Cheeger inequality: if~$\lambda$ denotes an
	eigenvalue of~$W$  other than 1 with the  greatest absolute value and $\pi_{\min}= \min_{p} ( \pi_p)$, then 
	\begin{equation}\label{eq:bottleneck}
	\lvert \lambda \rvert \geq
	1 - C\cdot \Phi(W) \cdot\log \frac{1}{\pi_{\min}}
	\end{equation}
	where $C$ is some universal constant.
	
\vspace{0.2cm}
We are now in position to prove Theorem~\ref{thm:butterfly}.

\begin{proof}
Let~$W$ be the stochastic matrix of size $n=2m$ associated with the equal-neighbor averaging 
	algorithm when the communication graph is the $m$-butterfly graph.
We verify that~$W$ is a primitive matrix and its Perron vector is given by
\begin{equation*}\label{eq:ex:chung:pi}
\pi_1 = \frac{1}{5}
\quad,
\quad
\pi_p = 
\frac{3}{5\cdot 2^{p}} \text{ for } p\in\{2,\dots,m-1\}
\quad\text{and}\quad
\pi_m = \frac{3}{5\cdot 2^{m-1}} \, .
\end{equation*}
By symmetry, this also defines the Perron vector for the
remaining indices between~$m+1$ and~$2m$ since $\pi_p =
\pi_{2m - p +1}$. 
Hence the smallest entry of $W$'s Perron vector is 
	$$\pi_{\min} = \pi_m = \frac{3}{5 \cdot 2^{m-1}} \, .$$
Choosing $S = \{1,2,\dots,m\}$, we have  $\pi (S) = 1/2 $, and
	$$\Phi_S(W) = 2 \pi_m W_{m\bar{m}} = \frac{1}{5\cdot 2^{m-2}} \, .$$
Hence 
	$\Phi(W) \leq \frac{1}{5\cdot 2^{m-2}} $.
The above Cheeger inequality~(\ref{eq:bottleneck}) gives the existence of an eigenvalue
	$\lambda\neq1$ of~$W$ with 
\begin{equation}\label{eq:ex:chung:gap}
1 - \lvert \lambda\rvert
=
O\left( \frac{m}{2^m} \right)
\end{equation}

We extend the definition of the vector semi-norm~$\delta$ to complex
	vectors by setting
	$$\delta(z) = 2\inf_{c\in\IC} \lVert z - c\cdot \onevec \rVert_\infty \, .$$ 
For all real vectors~$x$ and~$y$, we have
\begin{equation}\label{eq:delta:complex:real}
\max\{ \delta(x) , \delta(y) \}
\leq
\delta(x + iy)
\leq
2\max\{ \delta(x) , \delta(y) \} \, .
\end{equation}
Let $w$ be a $\delta$-normalized (possibly complex)
	eigenvector associated to~$\lambda$; thus
	$\delta(W^k w) = \lvert\lambda\rvert^k$.
Writing $w=u + iv$, we deduce from~\eqref{eq:delta:complex:real} that
	\begin{equation*}
	2\max\{\delta(W^k u),\delta(W^k v)\} \geq \delta(W^k w) = \lvert \lambda\rvert^k \delta(w)
	\geq  \lvert \lambda \rvert^k \max \{\delta(u),\delta(v)\} \, ,
	\end{equation*}
	which shows that either $\delta(W^k u)/\delta(u) \geq \lvert \lambda \rvert^k/2$
	or $\delta(W^k v)/\delta(v) \geq \lvert \lambda \rvert^k/2$.
This means $\delta(W^k) \geq \lvert \lambda\rvert^k/2$.
We use the inequality $1-a \geq e^{-2{a}}$, which holds for all $0\leq a \leq 0.7968$.
Together with~\eqref{eq:ex:chung:gap}, it shows
	\begin{equation*}
	\delta(W^k) = \exp\left(-O(k{m}/2^{m})\right) =  \exp\left(-O(k/2^{n/3})\right)
	\, .
	\end{equation*}
This means that for every~$\varepsilon>0$ and every $K = O\big( 2^{n/3} \log\frac{1}{\varepsilon} \big)$, 
	there exist initial values between~$0$ and~$1$ such that $\delta\big( x(K) \big) > \varepsilon$, i.e., 
	in the equal neighbor algorithm, processes should not decide at time~$K$ in order not to violate 
	$\varepsilon$-Agreement when the communication graph is the time-constant $m$-butterfly graph.
\end{proof}

\section{Approximate consensus with dynamic faults}

Time varying communication graph may result from benign communication faults in the
	case of message losses.
With such an interpretation of missing links in  communication graphs, Theorem~\ref{thm:SW} 
	coincides with the impossibility result of consensus established by Santoro and Widmayer~\cite{SW89}
	for synchronous systems with $n$ processes connected by  a complete communication graph and 
	$n-1$ communication faults per round.

In the light of Theorem~\ref{thm:coord}, we now revisit the problem of approximate consensus in the
	context of communication faults, whether they are due to  link or process failures.
We begin with a corollary of Theorem~\ref{thm:coord} that gives a new result  on the solvability 
	of approximate consensus in a complete synchronous network.
Then we explain how Theorem~\ref{thm:coord} provides a new understanding of some classical procedures 
	in approximate consensus algorithms to tolerate  benign process failures, and how it extends
	the  correctness of these procedures to {\em dynamic}  failure models.
We also derive the known result that approximate consensus is solvable in an asynchronous system
	with a complete communication graph and process failures if and only if there is a strict majority of non faulty
	processes.

\subsection{Link faults}

We begin with a simple sufficient condition for a directed graph to be rooted.
 
\begin{lem}\label{lem:2n-2}
Any directed graph with~$n$ nodes and  at
	least $n^2 - 3n + 3$ links is rooted.
\end{lem}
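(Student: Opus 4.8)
The plan is to prove the contrapositive: I will show that a directed graph $G$ on $n$ nodes that is \emph{not} rooted can carry at most $n^2 - 3n + 2 = (n-1)(n-2)$ links between distinct nodes, so that any $G$ with at least $n^2 - 3n + 3$ such links must be rooted. Here I read a link as an edge between two distinct nodes: since every node carries a self-loop, these loops are fixed and are not counted, and what matters is how many of the $n(n-1)$ ordered pairs of distinct nodes carry an edge.

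First I would locate the obstruction to rootedness. If $G$ is not rooted then, by Proposition~\ref{prop:rootedbis}, its condensation $G^*$ has at least two nodes without incoming neighbors; being a finite acyclic graph, $G^*$ always has at least one such source, and non-rootedness upgrades this to two. I would pick two of them and let $A$ and $B$ be the corresponding strongly connected components of $G$. These are disjoint and non-empty, and being sources of $G^*$ they receive no link from outside: every link whose head lies in $A$ has its tail in $A$, and likewise for $B$.

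The counting step is then short. Writing $a = |A|$ and $b = |B|$, all $a(n-a)$ ordered pairs $(w,p)$ with $w \notin A$ and $p \in A$ are absent from $G$, and likewise all $b(n-b)$ pairs pointing into $B$. Because $A$ and $B$ are disjoint, these two families of absent pairs have distinct heads and are therefore themselves disjoint, so $G$ is missing at least $a(n-a) + b(n-b)$ links. I would then use the elementary identity $t(n-t) - (n-1) = (t-1)(n-1-t) \ge 0$, valid for every integer $t \in \{1,\dots,n-1\}$, which gives $a(n-a) \ge n-1$ and $b(n-b) \ge n-1$; hence at least $2(n-1)$ links are missing and at most $n(n-1) - (2n-2) = n^2 - 3n + 2$ remain. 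This establishes the contrapositive, and the lemma follows.

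I do not expect a serious obstacle, as the argument is a disjointness-plus-optimization count. The only points requiring care are making sure the two forbidden families of links are genuinely disjoint (which is why I insist the two source components be disjoint rather than merely distinct), and checking that $a = b = 1$, i.e.\ two isolated source nodes, is the extremal configuration — this is exactly what the inequality $t(n-t) \ge n-1$ certifies, since it is tight precisely at $t = 1$ and $t = n-1$.
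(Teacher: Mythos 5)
Your proof is correct and follows essentially the same route as the paper's: both arguments take the contrapositive, extract two source components $S_1, S_2$ (your $A, B$) from the condensation via Proposition~\ref{prop:rootedbis}, count the $n_1(n-n_1) + n_2(n-n_2)$ forbidden incoming links, and optimize to conclude that a non-rooted graph has at most $n^2 - 3n + 2$ non-loop links. Your write-up merely makes explicit two points the paper leaves implicit — the disjointness of the two families of missing links, and the inequality $t(n-t) \ge n-1$ with its extremal cases — but the argument is the same.
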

\begin{proof}
Let $G$ be a directed graph with~$n$ nodes that  is not a rooted graph.
Then the condensation of $G$ has   two nodes  without incoming link.
We denote the corresponding two strongly connected components in~$G$ by~$S_1$ and $S_2$,
	and their cardinalities by~$n_1$ and~$n_2$, respectively.
Therefore the number of links in~$G$ that are not self-loops is at most equal to $n^2-n - n_1 (n-n_1) - n_2 (n-n_2)$.
Since  for every pair of  integers $n_1,n_2$ in $[n-1]$, we have
	$$ n^2-n - n_1 (n-n_1) - n_2 (n-n_2) \leq  n^2 -3n +2 \, ,$$
	 it follows that $G$ has at most  $n^2 - 3n + 2$ links.
\end{proof}

From the equality $n^2 - 3n + 3 = (n^2 - n) - (2n -  3)$, we immediately derive the following theorem.

\begin{thm}\label{thm:2n-2}
Approximate consensus is solvable in a complete network with~$n$ processes if
	there are at most $2n-3$ link faults per  round.
\end{thm}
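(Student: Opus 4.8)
The plan is to derive Theorem~\ref{thm:2n-2} as an immediate consequence of Lemma~\ref{lem:2n-2} together with Corollary~\ref{cor:solvac}. The bridge between the two is the observation, already highlighted in the text preceding the statement, that a ``link fault'' in a complete communication graph is nothing but a missing edge relative to the complete directed graph. So the first step is to translate the fault-tolerance hypothesis into a purely graph-theoretic statement about the communication graphs that may arise.

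Concretely, I would argue as follows. In a complete network of $n$ processes, the underlying graph has every possible directed link; counting the self-loops (which are always present, since a process communicates with itself), there are $n^2$ ordered pairs, of which $n$ are self-loops, leaving $n^2 - n$ potential non-loop links. A round with at most $2n-3$ link faults therefore yields a communication graph $G$ retaining at least $(n^2 - n) - (2n-3)$ of its non-loop links, all self-loops intact. Using the arithmetic identity $n^2 - 3n + 3 = (n^2 - n) - (2n-3)$ quoted just before the theorem, this is exactly the bound $n^2 - 3n + 3$ from Lemma~\ref{lem:2n-2}. Hence every communication graph that can occur under the fault hypothesis has at least $n^2 - 3n + 3$ links and is, by Lemma~\ref{lem:2n-2}, rooted.

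The second step is then purely definitional: the network model $\mathcal{N}$ consisting of all such fault-affected communication graphs has every graph rooted, so $\mathcal{N}$ is a coordinated network model. Corollary~\ref{cor:solvac} states that approximate consensus is solvable in any coordinated network model, and we are done.

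I do not anticipate a genuine obstacle here, since all the heavy lifting is contained in Lemma~\ref{lem:2n-2} (the extremal edge count forcing rootedness) and in the solvability result Corollary~\ref{cor:solvac}. The only point requiring a little care is the bookkeeping of the edge count: one must be sure to count self-loops consistently on both sides so that the faults are measured against the $n^2 - n$ non-loop links of the complete graph, matching the total link count in Lemma~\ref{lem:2n-2}, which includes self-loops as part of the $n^2 - 3n + 3$ threshold. Once the hypotheses are aligned this way, the theorem follows immediately, which is precisely why the text introduces it with ``we immediately derive.''
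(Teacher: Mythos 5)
Your proposal is correct and is essentially the paper's own derivation: the paper obtains Theorem~\ref{thm:2n-2} directly from Lemma~\ref{lem:2n-2} via the identity $n^2-3n+3=(n^2-n)-(2n-3)$, the resulting network model being coordinated so that Corollary~\ref{cor:solvac} applies. One small remark on your bookkeeping aside: the threshold in Lemma~\ref{lem:2n-2} is best read as counting non-self-loop links (that is what its proof actually bounds), but since the intact self-loops only add links, your application of the lemma is valid under either reading.
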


Interestingly, compared with the impossibility result  for consensus  in Theorem~\ref{thm:SW},  
	the above theorem shows that the number of link faults that can be tolerated increases by a factor of at least 2 
	when solving  approximate consensus instead of consensus.
Besides  it is easy to construct a non-rooted communication graph with  $n^2- 2n  + 2$ links which,
	combined with Theorem~\ref{thm:noncoord},  shows that the bound in the above theorem is tight.

\subsection{Dynamic sender faulty omission model}

Of particular interest are the failure models for complete networks in which process senders are blamed for message
	losses: in this way, one limits  the number of nodes which are origins of
	missing links in  communication graphs to some integer  $f \in [ n-1 ] $.
Parameter $f$ is not a global upper bound over the whole executions, but bounds the number of faulty senders 
	{\em round by round}.
The resulting failure model, referred to as the {\em dynamic sender faulty model}, thus handles dynamic
	failures. 
	
Theorem~\ref{thm:nonsplit} directly applies since the corresponding communication graphs are nonsplit, 
	even in the case $f=n-1$.
Moreover  following the proof of Theorem~\ref{thm:nonsplit}, at least $n-f$ columns of the stochastic matrix $W(k)$
	associated to round~$k$ of an equal neighbor averaging algorithm are positive; hence $W(k)$ satisfies
	$$ \delta(W(k)) \leq  \frac{f}{n} \, ,$$
	which shows that decisions can be made at round $\log_2  \frac{1}{\varepsilon} $ when only a minority
	of processes may be faulty.
Using the inequalities
	$$ \log \frac{n}{f}  \leq \log \left ( 1 + \frac{1}{n-1} \right ) \leq \frac{1}{n} \, ,$$
	we derive a linear time complexity in the number of processes in the wait-free case, i.e., $f=n-1$.

\begin{cor}\label{cor:clean}
In a complete synchronous network with $n$ processes and the dynamic sender faulty  model, 
	the equal neighbor averaging algorithm achieves $\varepsilon$-consensus in
	$ n \log \frac{1}{\varepsilon} $ rounds.
If in each round, only a minority of processes may be faulty, then processes can decide from  round
	$\log_2  \frac{1}{\varepsilon} $ onwards. 
\end{cor}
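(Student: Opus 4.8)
The plan is to read the corollary as a direct application of Theorem~\ref{thm:nonsplit}, sharpened by a more careful estimate of the coefficient of ergodicity. First I would check that every communication graph arising in the dynamic sender faulty model is nonsplit. In a complete network at most $f \le n-1$ senders can be faulty in a given round, so at least one process~$r$ sends successfully to everybody; this~$r$ lies in $\In_p(k)$ for every process~$p$ and is therefore a common incoming neighbor of every pair~$(p,q)$. Hence each graph is nonsplit, and Theorem~\ref{thm:nonsplit} applied to the equal neighbor averaging algorithm (which has parameter $\varrho = 1/n$) immediately yields $\varepsilon$-consensus in $n \log\frac{1}{\varepsilon}$ rounds, giving the first assertion.

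For the second assertion I would revisit the ergodicity estimate in the proof of Theorem~\ref{thm:nonsplit} and count positive columns of $W(k)$. Since $W_{pq}(k) = w_{qp}(k)$, a column~$q$ of $W(k)$ is entirely positive exactly when~$q$ reaches every process, i.e.\ when~$q$ is a non-faulty sender; as at most~$f$ senders fail, at least $n-f$ columns are positive. For the equal neighbor algorithm each nonzero weight equals $1/|\In_p(k)| \ge 1/n$ because in-degrees never exceed~$n$. Consequently, for any two processes~$p$ and~$q$, the at least $n-f$ common (non-faulty) incoming neighbors each contribute at least $1/n$ to the sum $\sum_r \min\bigl(W_{pr}(k), W_{qr}(k)\bigr)$, so this sum is at least $(n-f)/n$ and therefore $\delta\bigl(W(k)\bigr) \le f/n$.

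I would then conclude exactly as in Theorem~\ref{thm:nonsplit}: by sub-multiplicativity of the seminorm~$\delta$ and $\delta\bigl(x(0)\bigr) \le 1$ we obtain $\delta\bigl(x(k)\bigr) \le (f/n)^k$. When only a minority of processes may be faulty we have $f < n/2$, hence $f/n < 1/2$ and $(f/n)^k \le 2^{-k}$, so $\delta\bigl(x(k)\bigr) \le \varepsilon$ as soon as $k \ge \log_2\frac{1}{\varepsilon}$, which is the claimed decision round.

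The part requiring the most care is the bookkeeping in the middle paragraph: one must argue that column-positivity of $W(k)$ corresponds precisely to a sender being non-faulty, and not merely to $p$ hearing itself through its self-loop, which only guarantees a diagonal entry and does not make $p$ a common incoming neighbor of a distinct pair. One must also note that the $n-f$ guaranteed positive columns are common to all pairs simultaneously. Once this identification is made, the lower bound $1/n$ on equal neighbor weights and the sub-multiplicativity of~$\delta$ make the rest routine; in particular the general $n\log\frac{1}{\varepsilon}$ bound is recovered from $\delta\bigl(W(k)\bigr) \le f/n$ in the wait-free case $f = n-1$, since then $\bigl(\tfrac{n-1}{n}\bigr)^k \le \varepsilon$ already for $k$ of order $n\log\frac{1}{\varepsilon}$.
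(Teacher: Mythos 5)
Your proposal is correct and takes essentially the same route as the paper: nonsplitness of the sender-faulty graphs plus Theorem~\ref{thm:nonsplit} gives the $n\log\frac{1}{\varepsilon}$ bound, and counting the at least $n-f$ all-positive columns of $W(k)$ (entries at least $1/n$) gives $\delta\bigl(W(k)\bigr)\le f/n$, hence decision at round $\log_2\frac{1}{\varepsilon}$ under a minority of faults. Even your closing remark recovering the linear wait-free bound from $(f/n)^k$ with $f=n-1$ mirrors the paper's own concluding estimate.
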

	
In the context of omission models with static faulty senders, different algorithmic procedures have been introduced 
	for solving approximate consensus with averaging algorithms (for instance, see ~\cite{DLPSW86,Fek90}).
We briefly recall two of them, namely the {\em  Reduce} and the {\em Center} procedures, and explain 
	how both correctness  and time complexity of the resulting
	approximate consensus algorithms immediately follow from Theorem~\ref{thm:nonsplit}
	 in the sender faulty omission model.

The first procedure is designed for the case $n>2f$.
For each process, it consists in replacing each missing value by 
	some arbitrary value smaller than 0, and then in computing the mean of the values 
	that remain  after removing the $f$ smallest values.
This procedure, called {\em Reduce}, corresponds to a {\em logical} communication graph
	that is also rooted: among the $n-f$ values selected by a process,
	at least $n-2f$ are actually selected by all the other processes.
Theorem~\ref{thm:nonsplit} ensures that the  averaging algorithms resulting from the Reduce procedure
	achieve approximate consensus.
For an equal neighbor averaging algorithm, the stochastic matrix $W(k)$ associated to round~$k$ 
	has at least $n-2f$ positive columns and satisfies 
	$$  \delta(W(k)) \leq  1-\frac{n-2f}{n-f} = \frac{f}{n-f}  \, .$$ 
From  $ f/n  \leq  f/( n-f ) $, we conclude that the Reduce procedure  slows down 
	equal neighbor averaging algorithms, even though time complexity remains linear:
	the Reduce procedure is useless in the context of omissions and has been actually  introduced to 
	tolerate Byzantine failures.
	
For tolerating $f$ crash failures, Fekete~\cite{Fek90} introduced another procedure, called {\em Center}, which 
	is actually a refinement of the Reduce procedure:  at every round, each process selects
	 the $n-f$ or $n-f+1$ central values it has just received.
More precisely, if process~$p$ receives $n-t$ values at round~$k$, then either $f-t$ is even and $p$ 
	removes  the $(f-t)/2$ smallest values and the $(f-t)/2$ greatest values, or 
	$f-t$ is odd and $p$  only removes the $(f-t-1)/2$ smallest values and the $(f-t -1)/2$ greatest values.
In the case $f-t$ is even, $p$ applies the equal neighbor averaging rule to update its local variable:
	all the selected values have the same weight, namely   $1/(n-f)$.
Otherwise $f-t$ is odd and $p$ computes the weighted average of the selected values with
	the same weight  $1/(n-f)$ for all values except the 
	smallest and the greatest one whose weight is half, namely $1/2(n-f)$.
As a corollary of Theorem~\ref{thm:coord}, we can prove the correctness of the Center procedure to solve
	approximate consensus in the dynamic sender faulty model.
Moreover we can check that the  stochastic matrix $W(k)$ associated to round~$k$  satisfies 
	$$  \delta \left( W(k) \right) \leq  \frac{f}{2(n-f)}  \, .$$
Therefore the Center procedure improves the simple equal neighbor averaging algorithms if $2f <n$.  
The results in~\cite{Fek90} concerning crash failures can thus be directly derived from
	Theorem~\ref{thm:coord} and  from the translation of the Center procedure in terms of averaging 
	algorithms, and thus are noticeably extended to the dynamic sender faulty model.

\subsection{Asynchronous systems with crash failures}

We now consider asynchronous complete networks with $n$ processes among which at most
	$f$ may crash.
As observed in~\cite{CS09}, in such networks we can easily  implement 
	 communication graphs~$G$ such that for each process~$p$,
	$$ | \In_p (G) | = n-f \, . $$
If only a minority of processes may crash, i.e., $n>2f$, we thus obtain a nonsplit network model,
	and Theorem~\ref{thm:nonsplit} applies.	
In particular, approximate consensus can be solved with an equal neighbor averaging algorithm that 
	terminates in a linear number of  rounds.	
Observe that   time complexity drastically reduces in synchronous networks to the constant $\log_2  \frac{1}{\varepsilon} $
	in Corollary~\ref{cor:clean}.

\begin{cor}\label{cor:asynch}
In an asynchronous complete network of $n$ processes among which a minority may crash, 
	nonsplit rounds can be implemented and the equal neighbor averaging 
	algorithm 	in which all non-crashed processes decide at  round
	$n  \log  \frac{1}{\varepsilon} $ 
	achieves $\varepsilon$-consensus.
\end{cor}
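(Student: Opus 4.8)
The plan is to reduce the asynchronous setting to the synchronous round-based framework via the standard $(n-f)$-waiting implementation of rounds, so that Theorem~\ref{thm:nonsplit} applies with no extra work.

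First I would make the round abstraction explicit, following~\cite{CS09}. At each round, every process broadcasts its current value of $x_p$ and then blocks until it has received the round-$k$ message from some set of $n-f$ processes. Since at most $f$ processes crash, at least $n-f$ processes are correct and eventually deliver their round-$k$ messages, so every non-crashed process eventually collects $n-f$ values and proceeds; rounds therefore never deadlock and progress for all correct processes. Taking $\In_p(G(k))$ to be this set of $n-f$ senders --- which necessarily contains $p$ itself, providing the required self-loop --- yields a communication graph with $|\In_p(G(k))| = n-f$ for every $p$.

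Next I would check that each such $G(k)$ is nonsplit. For any two processes $p,q$, inclusion--exclusion gives
\[
|\In_p(G(k)) \cap \In_q(G(k))| \;\ge\; |\In_p(G(k))| + |\In_q(G(k))| - n \;=\; 2(n-f) - n \;=\; n - 2f \;\ge\; 1 ,
\]
where the last inequality is exactly the minority assumption $n > 2f$. Hence the intersection is non-empty and $G(k)$ is nonsplit, so the implemented communication patterns all lie in a nonsplit network model. Then I would invoke Theorem~\ref{thm:nonsplit}: the equal neighbor averaging algorithm has parameter $\varrho = 1/n$, so the theorem yields Validity and $\varepsilon$-Agreement after $\frac{1}{\varrho}\log\frac{1}{\varepsilon} = n\log\frac{1}{\varepsilon}$ rounds; since correct processes always reach that round, letting them decide there supplies Termination for all non-crashed processes.

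The one point requiring genuine care --- and the main conceptual obstacle --- is the soundness of the round abstraction itself: I must argue that blocking for $n-f$ messages never blocks a correct process, and that the value delivered in round $k$ is the round-$(k-1)$ state of the sender, so that the communication-closed semantics underlying Theorem~\ref{thm:nonsplit} is respected. Once that is granted, the rest is the single counting bound above. I would also remark why the asynchronous bound $n\log\frac{1}{\varepsilon}$ is weaker than the synchronous $\log_2\frac{1}{\varepsilon}$ of Corollary~\ref{cor:clean}: asynchrony only guarantees a \emph{common} incoming neighbor for each \emph{pair} of processes (the nonsplit property), not $n-f$ globally shared incoming neighbors as in the synchronous sender faulty model, so only the weaker coefficient $\varrho = 1/n$ is available rather than the bound $\delta(W(k)) \le f/n$.
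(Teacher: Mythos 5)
Your proof is correct and takes essentially the same route as the paper: implement rounds by waiting for $n-f$ messages (as in~\cite{CS09}), observe that $n>2f$ forces any two in-neighborhoods of size $n-f$ to intersect, so the implemented communication graphs form a nonsplit network model, and then apply Theorem~\ref{thm:nonsplit} with $\varrho=1/n$ to get $\varepsilon$-consensus in $n\log\frac{1}{\varepsilon}$ rounds. Your additional remarks on the soundness of the round abstraction and on the complexity gap with Corollary~\ref{cor:clean} merely make explicit what the paper leaves implicit.
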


The equal neighbor averaging algorithm  coincides with the {\em AsynchApproxAgreement}
	algorithm proposed in~\cite{Lyn96} to solve the approximate consensus problem in the
	case $n>3f$: Lynch claimed that ``a more complicated algorithm is needed for $n>2f$.''
Corollary~\ref{cor:asynch} shows that this algorithm actually works while  $n>2f$. 

Finally a simple partitioning argument shows that approximate consensus is not 
	solvable if $n\leq 2f$.

%
%
%
%
%

\section{Averaging algorithms in partially synchronous systems}

The round-based computational model considered so far assumes that rounds are communication 
	closed layers: messages from one process to another are delivered in the rounds in which 
	they are sent.
To guarantee the latter condition, processes just need to timestamp the messages 
	they send with  the current round number, and to discard old messages, 
	i.e., messages sent in previous rounds.
In  a perfect synchronous network with  transmission delays upper bounded by~$D$, rounds are implemented in
	an optimal way when using  timeouts that are
	equal to $D$;  in this way, no messages are discarded.
If the network is not perfectly synchronous, an aggressive politics for timeouts may result in
	discarding many messages while large timeouts drastically slow down the system.
To manage the trade-off between timeliness and connectivity, one may relax the condition
	of communication closed layers by allowing processes to receive {\em outdated} messages.
However the number of rounds between the sending and the receipt of messages 
	should be bounded  to keep the system efficient.
This notion of  {\em partially synchronous rounds} coincides with the model of distributed asynchronous computation
	developed by Tsitsiklis in~\cite{Tsi84,BT89}.

In the case of averaging algorithms with a bound on message delays equal to $\Delta$, the local variable~$x_p$ is updated according to the
	following rule:
	\begin{align}
 	x_p(k) = \sum_{q \in \In_p(k)} w_{qp}(k) \, x_q \left (  \kappa_q^p(k) \right ) ,\label{eq:updatedelay}
	\end{align}
	where $ \kappa_q^p(k) \in \{ k - \Delta, \dots, k-1\}$.
Since each process~$p$ has  immediate access to its own local variable $x_p$,  we further
	assume that for every partially synchronous round~$k$,
	$$   \kappa_p^p(k) =k - 1\, .$$
We call such an execution a {\em $\Delta$-bounded execution}.
The case of zero communication delays is captured by $\Delta = 1$, and equation (\ref{eq:updatedelay}) corresponds 
	in this case to an execution of the averaging algorithm with synchronous rounds.
We do not require the functions $\kappa_q^p$ to be either non-decreasing, surjective, or injective.
In other words, communications between processes may be non-FIFO and unreliable (duplication and loss). 

Note that the communication graph~$G(k)$ in round~$k$ is understood to be the
	graph defined by the incoming values at round~$k$, i.e., $(p,q)$ is a link in $G(k)$ 
	if and only if $w_{qp}(k)>0$.

We now extend Theorems~\ref{thm:nonsplit} and~\ref{thm:coord} to
	partially synchronous rounds.
Our proof strategy is based on a reduction to the synchronous case:
	each process corresponds to a set of $\Delta$ virtual processes, and 
	every  $\Delta$-bounded execution of an averaging algorithm 
	with $n$ processes coincides with a synchronous execution
	of an averaging algorithm with~$n\Delta$ processes.

\subsection{Reduction to synchronous rounds}\label{sec:red}

We mimic the  reduction of a $\Delta$-th order ordinary differential
	equation to a system of $\Delta$ ordinary differential equations of first order.	
We define the vectors $\tilde{x}(k)\in\IR^{n\Delta}$ by setting
	\begin{equation}
	\tilde{x}_{p\Delta-d}(k) = x_p(k-d)
	\end{equation}
	for $p\in[n]$,  $0\leq d \leq \Delta-1$, and with the auxiliary definition $x(-k) = x(0)$ for all
	the positive integers $k$.
We also define the  the $\Delta n \times \Delta n$ matrix  
	$\tilde{W}(k) $ by
	\begin{equation}\label{eq:tildeW}
	\tilde{W}_{p\Delta-d,q\Delta-d'} (k)
	=
		\begin{cases}
		w_{qp}(k) & \text{if $d=0$ and $d' = k - \kappa_{q}^{p}(k) -1 $}\\
		1 & \text{if $p=q$ and $d' = d - 1$}\\
		0 & \text{else}
		\end{cases}
	\end{equation}	
The key point is  that  the vector $ \tilde{x}(k)$ is updated according to the linear
	recursion with ``zero delay'' 
	\begin{equation}\label{eq:recursion:partial}
	\tilde{x}(k) = \tilde{W}(k) \tilde{x}(k-1)\, ,
	\end{equation}
	 which we prove now.
For $d\neq 0$, we have
\begin{equation*}
\begin{split}
\big(\tilde{W}(k)\tilde{x}(k-1)\big)_{p\Delta - d} & =
\sum_ b \tilde{W}_{p\Delta - d, b} (k) \cdot \tilde{x}_{ b}(k-1)
=
\tilde{x}_{p\Delta - (d -1)}(k-1)
\\ & =
x_p\big(k - 1 - (d -1)\big) = x_p(k-d)
=
\tilde{x}_{p\Delta - d}(k) \, ,
\end{split}
\end{equation*}
while
for $d = 0$, we have
\begin{equation*}
\begin{split}
\big(\tilde{W}(k)\tilde{x}(k-1)\big)_{p\Delta}
& =
\sum_ b \tilde{W}_{p\Delta, b} (k) \cdot \tilde{x}_{ b}(k-1)
=
\sum_q w_{qp}(k) \cdot \tilde{x}_{q\Delta - (k - \kappa_q^p(k) - 1)}(k-1)
\\ & =
\sum_q w_{qp}(k) \cdot x_q\big( \kappa_q^p(k)\big) 
=
x_p(k) 
=
\tilde{x}_{p\Delta}(k) \, .
\end{split}
\end{equation*}
In any case, we have $\big(\tilde{W}(k)\tilde{x}(k-1)\big)_{ b} = \tilde{x}_ b(k)$, 
	which shows~\eqref{eq:recursion:partial}.

We easily check that each matrix $\tilde{W}(k) $ is stochastic  with positive entries 
	 at most equal to~$\varrho$.
Because of (\ref{eq:tildeW}), we have 
	\begin{equation}\label{eq:tildeloop}
	 \tilde{W}_{p\Delta,p\Delta} (k) = w_{pp}(k) \, ,
	 \end{equation}
	which is positive since there is a self-loop at~$p$ in the communication graph~$G(k)$.

\begin{figure}
\centering
\begin{tikzpicture}[>=latex',scale=0.8, every node/.style={transform shape}]
\node[draw,circle] (n0) at (0,0)  {$\scriptstyle 5p-0$};
\node[draw,circle] (n1) at (-2,0) {$\scriptstyle 5p-1$};
\node[draw,circle] (n2) at (-4,0) {$\scriptstyle 5p-2$};
\node[draw,circle] (n3) at (-6,0) {$\scriptstyle 5p-3$};
\node[draw,circle] (n4) at (-8,0) {$\scriptstyle 5p-4$};
\draw[<-] (n0) .. controls +(1.0,-0.5) and +(1.0,0.5) .. node[right]
{$w_{pp}(k)$} (n0);
\draw[<-] (n1) -- node[below] {$1$} (n0);
\draw[<-] (n2) -- node[below] {$1$} (n1);
\draw[<-] (n3) -- node[below] {$1$} (n2);
\draw[<-] (n4) -- node[below] {$1$} (n3);
\node[draw,circle] (m0) at ( 0,-2.5) {$\scriptstyle 5q-0$};
\node[draw,circle] (m1) at (-2,-2.5) {$\scriptstyle 5q-1$};
\node[draw,circle] (m2) at (-4,-2.5) {$\scriptstyle 5q-2$};
\node[draw,circle] (m3) at (-6,-2.5) {$\scriptstyle 5q-3$};
\node[draw,circle] (m4) at (-8,-2.5) {$\scriptstyle 5q-4$};
\draw[<-] (m0) .. controls +(1.0,-0.5) and +(1.0,0.5) .. node[right]
{$w_{qq}(k)$} (m0);
\draw[<-] (m1) -- node[below] {$1$} (m0);
\draw[<-] (m2) -- node[below] {$1$} (m1);
\draw[<-] (m3) -- node[below] {$1$} (m2);
\draw[<-] (m4) -- node[below] {$1$} (m3);
\draw[->] (n2) -- node[right] {$w_{qp}(k)$} (m0);
\end{tikzpicture}
\caption{Part of the graph $\tilde{G}(k)$ corresponding to link $(p,q)$ in $G(k)$, 
	$\Delta=5$, and $\kappa_q^p(k) = 2$}
\label{fig:delay:reduction}
\end{figure}
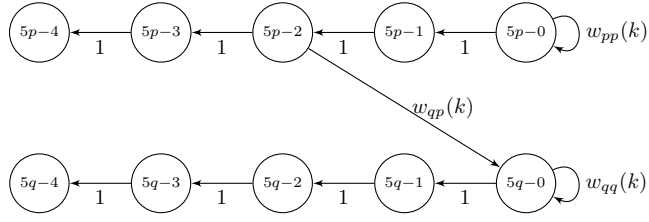

Let $\tilde{G}(k) = \big([n\Delta] , \tilde{E}(k)\big)$ be the directed graph associated
	to the stochastic matrix $ \tilde{W}(k)$, i.e.,  there is a link
	$( a, b)$ in  $\tilde{G}(k) $ if and only if $\tilde{W}_{ b, a}(k)>0$.
The graph $\tilde{G}(k) $ can be seen as the communication graph of a network of
	$n\Delta$ processes with the restriction that some nodes have no self-loop.
By~(\ref{eq:tildeloop}), there is however a self-loop at each  node $p\Delta$ with $p\in[n]$.
Figure~\ref{fig:delay:reduction} shows part of  $\tilde{G}(k)$  corresponding to
	one  link $(p,q)$ in $G(k)$.


\subsection{Nonsplit network model}

Even if the communication graph $G(k)$ is nonsplit, 
	 the graph~$\tilde{G}(k)$ contains two nodes without a common 
	incoming neighbor when $\Delta >1$.
However we will show that in a nonsplit network model, each cumulative graph over $2\Delta -1$ 
	rounds is nonsplit, which allows us to extend Theorem~\ref{thm:nonsplit} to the
	partially synchronous case.

\begin{thm}\label{thm:nonsplit:delay}
In a nonsplit network model of $n$ processes, every averaging algorithm with 
	parameter~$\varrho$ achieves $\varepsilon$-consensus
	in $(2\Delta-1)\big(\frac{1}{\varrho}\big)^{2\Delta-1}\log \frac{1}{\varepsilon} + 2\Delta-2$
	rounds of a $\Delta$-bounded execution.
\end{thm}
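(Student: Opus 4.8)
The plan is to use the reduction from Section~\ref{sec:red}, which translates a $\Delta$-bounded execution of an averaging algorithm on $n$ processes into a synchronous execution on $n\Delta$ virtual processes with stochastic matrices $\tilde{W}(k)$ whose positive entries are bounded below by $\varrho$. The goal is then to invoke Corollary~\ref{cor:nonsplit} with the value $K = 2\Delta-1$, so the main task is to verify that the generalized network model formed by the graphs $\tilde{G}(k)$ is $(2\Delta-1)$-nonsplit, i.e.\ that any product $\tilde{G}(k+2\Delta-2)\circ\dots\circ\tilde{G}(k)$ is nonsplit. Note that this is a \emph{generalized} network model, since the virtual nodes $p\Delta - d$ with $d\neq 0$ carry no self-loop, so I must lean on the version of Corollary~\ref{cor:nonsplit} that the paper has already observed holds for generalized models.

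First I would understand the structure of $\tilde{G}(k)$: each real process $p$ corresponds to a chain of $\Delta$ virtual nodes $p\Delta, p\Delta-1, \dots, p\Delta-(\Delta-1)$, linked by ``shift'' edges of weight $1$ from $p\Delta-d$ to $p\Delta-(d+1)$ (recording the aging of values), a self-loop only at the head $p\Delta$, and the genuine communication edges of $G(k)$ entering the head $q\Delta$ from some node $p\Delta - (k-\kappa_q^p(k)-1)$. The intuition is that reaching a common incoming neighbor of two virtual nodes requires, first, walking up the delay chain to a head node (which costs at most $\Delta-1$ shift steps), then using one nonsplit step in the underlying graph $G$ to find a common ancestor head, and finally propagating that common value back down; the worst case total is $2\Delta-1$ steps. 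I would make this precise by tracing, for two arbitrary virtual nodes $a$ and $b$, the in-neighbor sets under iterated products of the $\tilde{G}(k)$, analogously to the sets $S_p(k)$ in the proof of Lemma~\ref{lem:productrooted}.

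Concretely, I would argue that after at most $\Delta-1$ product steps, the backward-reachable set from any virtual node contains the head $p\Delta$ of its own real process, and then a single nonsplit step of the underlying graphs (available because $G$ is nonsplit, giving a common incoming real neighbor $r$ whose value feeds both heads) produces a shared head node; the remaining steps, bounded so the total is $2\Delta-1$, carry that shared node into the in-neighbor sets of both $a$ and $b$. Since the product over any $2\Delta-1$ consecutive rounds is nonsplit, the $(2\Delta-1)$-nonsplit property holds. Applying Corollary~\ref{cor:nonsplit} with $K=2\Delta-1$ and the lower bound $\varrho$ on the positive entries of $\tilde{W}$ yields $\varepsilon$-consensus on the virtual system in $(2\Delta-1)\big(\frac{1}{\varrho}\big)^{2\Delta-1}\log\frac{1}{\varepsilon} + 2\Delta-2$ rounds; finally I would observe that $\delta$ of the virtual state vector $\tilde{x}(k)$ dominates $\delta$ of the real state vector $x(k)$ (the real values are a subset of the virtual components, namely the head entries), so $\varepsilon$-consensus on $\tilde{x}$ implies $\varepsilon$-consensus on the original $n$ processes, and Validity transfers trivially.

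The hard part will be pinning down the exact combinatorial bound $2\Delta-1$ rather than a looser one: I must show that the delay-chain detours on \emph{both} processes, plus the single genuine nonsplit step, never require more than $2\Delta-1$ rounds in the worst alignment of the delay functions $\kappa_q^p$, and in particular handle the case where the common neighbor's value itself enters the two heads with different delays. Controlling this requires carefully tracking, over the product of $2\Delta-1$ graphs, how far down each chain the shared value must be pushed so that it lands in the in-neighbor set of both target nodes simultaneously; this alignment argument, rather than the invocation of Corollary~\ref{cor:nonsplit}, is where the real work lies.
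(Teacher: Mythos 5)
Your plan coincides with the paper's own proof: the same reduction to the $n\Delta$-process virtual system, the same key step showing that any product of $2\Delta-1$ consecutive graphs $\tilde{G}(\cdot)$ is nonsplit --- at most $\Delta-1$ shift edges up one delay chain, one communication edge supplied by the nonsplit middle graph $G(k+\Delta-1)$, at most $\Delta-1$ shift edges down the other chain, with self-loops at the head nodes $r\Delta$, $p\Delta$, $q\Delta$ padding both paths to the same length --- followed by the generalized Corollary~\ref{cor:nonsplit} with $K=2\Delta-1$ and parameter $\varrho$, and the transfer back via $\delta\big(x(k)\big)\leq\delta\big(\tilde{x}(k)\big)$. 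The alignment difficulty you single out at the end (different delays $d_1\neq d_2$ with which the common neighbor's value enters the two heads) is resolved in the paper exactly by the self-loop padding you describe, so your proposal is correct and essentially identical to the paper's argument.
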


\begin{proof}
Validity is clear because of the fact that every value in~$\tilde{x}(k)$, and
	hence in~$x(k)$ is a convex combination of values in~$x(0)$.

For $\varepsilon$-Agreement,  we first show that each matrix $\tilde{W}(k +2\Delta-2:k)$ 
	is nonsplit.
We consider its associated graph $$\tilde{G}(k +2\Delta-2:k) = \tilde{ G }(k)\circ \dots \circ \tilde{ G}(k +2\Delta-2)$$ 
	and two arbitrary nodes $a = p\Delta -d$  and $b = q\Delta -d'$.
Since  the original communication graph~$G(k \Delta-1)$ is nonsplit,
	$p$ and $q$ have a common incoming neighbor~$r$ in this graph, i.e.,
	for some $r_1 =  r \Delta -d_1$ and $r_2 =  r \Delta -d_2$, there is a link from $r_1$ to $p\Delta$
	and a link  from $r_2$ to $q\Delta$ in the directed graph~$ \tilde{ G}(k +\Delta-1)$.
Following the links of the graphs $ \tilde{ G }(k),  \dots ,  \tilde{ G}(k +2\Delta-2)$ 
	depicted in~Figure~\ref{fig:ofMatthias},  we obtain a path from $r \Delta $ to $a$ 
	with $\Delta - d_1 -1$ self-loops at node~$r \Delta$, and $\Delta - d -1$ self-loops at 
	node~$p \Delta$.
This directed path corresponds to a link from $r \Delta $ to $a$ in~$\tilde{G}(k +2\Delta-2:k)$.
In the same way, we have a link from $r \Delta $ to $b$ in~$\tilde{G}(k +2\Delta-2:k)$, 
	which proves that the latter graph is nonsplit. 

\begin{figure}
\centering
\begin{tikzpicture}[>=latex',scale=0.8, every node/.style={transform shape}]
\node[draw,circle] (l0) at ( 0,2.5) {$\scriptstyle 5p-0$};
\node[draw,circle] (l1) at (-2,2.5) {$\scriptstyle 5p-1$};
\node[draw,circle] (l2) at (-4,2.5) {$\scriptstyle 5p-2$};
\node[draw,circle] (l3) at (-6,2.5) {$\scriptstyle 5p-3$};
\node[draw,circle] (l4) at (-8,2.5) {$\scriptstyle 5p-4$};
\draw[<-] (l0) .. controls +(1.0,-0.5) and +(1.0,0.5) .. node[above
right=3pt,shape=circle,draw,inner sep=2pt]
{$5$} (l0);
\draw[<-] (l1) -- node[above,shape=circle,draw,inner sep=2pt] {$6$} (l0);
\draw[<-] (l2) -- node[above,shape=circle,draw,inner sep=2pt] {$7$} (l1);
\draw[<-] (l3) -- node[above,shape=circle,draw,inner sep=2pt] {$8$} (l2);
\draw[<-] (l4) -- node[below] {} (l3);

\node[draw,circle] (n0) at (0,0)  {$\scriptstyle 5r-0$};
\node[draw,circle] (n1) at (-2,0) {$\scriptstyle 5r-1$};
\node[draw,circle] (n2) at (-4,0) {$\scriptstyle 5r-2$};
\node[draw,circle] (n3) at (-6,0) {$\scriptstyle 5r-3$};
\node[draw,circle] (n4) at (-8,0) {$\scriptstyle 5r-4$};
\draw[<-] (n0) .. controls +(1.0,-0.5) and +(1.0,0.5) ..
node[above right=3pt,shape=circle,draw,inner sep=2pt] {$0$}  (n0);
\draw[<-] (n1) -- node[above,shape=circle,draw,inner sep=2pt] {$1$} 
node[below,shape=rectangle,draw,inner sep=2pt] {$0$} (n0);
\draw[<-] (n2) -- node[above,shape=circle,draw,inner sep=2pt] {$2$}
node[below,shape=rectangle,draw,inner sep=2pt] {$1$} (n1);
\draw[<-] (n3) -- node[above,shape=circle,draw,inner sep=2pt] {$3$}
node[below,shape=rectangle,draw,inner sep=2pt] {$2$} (n2);
\draw[<-] (n4) -- node[below] {} node[below,shape=rectangle,draw,inner sep=2pt]
{$3$} (n3);

\node[draw,circle] (m0) at ( 0,-2.5) {$\scriptstyle 5q-0$};
\node[draw,circle] (m1) at (-2,-2.5) {$\scriptstyle 5q-1$};
\node[draw,circle] (m2) at (-4,-2.5) {$\scriptstyle 5q-2$};
\node[draw,circle] (m3) at (-6,-2.5) {$\scriptstyle 5q-3$};
\node[draw,circle] (m4) at (-8,-2.5) {$\scriptstyle 5q-4$};
\draw[<-] (m0) .. controls +(1.0,-0.5) and +(1.0,0.5) .. node[right]
{} (m0);
\draw[<-] (m1) -- node[below,shape=rectangle,draw,inner sep=2pt] {$5$} (m0);
\draw[<-] (m2) -- node[below,shape=rectangle,draw,inner sep=2pt] {$6$} (m1);
\draw[<-] (m3) -- node[below,shape=rectangle,draw,inner sep=2pt] {$7$} (m2);
\draw[<-] (m4) -- node[below,shape=rectangle,draw,inner sep=2pt] {$8$} (m3);

\draw[->] (n3) -- node[above,shape=circle,draw,inner sep=2pt] {$4$} (l0);
\draw[->] (n4) -- node[below,shape=rectangle,draw,inner sep=2pt] {$4$} (m0);
\end{tikzpicture}
\caption{Paths from $r\Delta$ to $a = p\Delta-3$ and $b = q\Delta - 4$ in
	the cumulative  graph $\tilde{G}(k+2\Delta -2 : k)$  with $\Delta=5$.
The link labels~$\ell$ (circled for the path to~$a$, boxed for
the path to~$b$)  denote the use of that link in round~$k+\ell$.}
\label{fig:ofMatthias}
\end{figure}
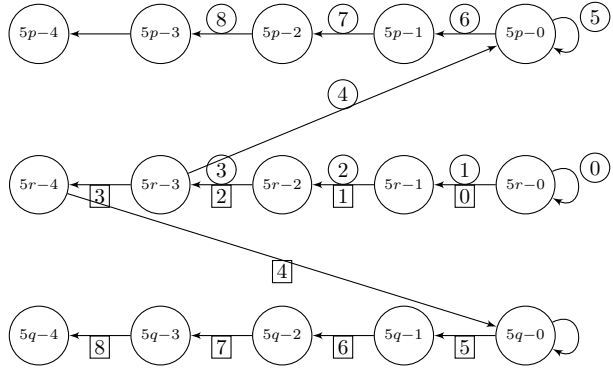
	
Since each positive entry of 
	$\tilde{W}(k+2\Delta-2:k)$ is at most equal to  $\varrho^{2 \Delta -1}$,
	 the recurrence relation~(\ref{eq:updatedelay}) thus corresponds to
	the synchronous execution of an averaging  algorithm with
	parameter $\varrho^{2 \Delta -1}$ in a generalized
	$2 \Delta -1$-nonsplit network model of $n\Delta$ processes.

Since $\delta\big( \tilde{x}(0) \big) = \delta\big( x(0) \big) \leq 1$, we deduce from Corollary~\ref{cor:nonsplit} that 
	 if   
	$k \geq (2\Delta \! - \!1) \left( \frac{1}{\varrho} \right)^{2\Delta \!- \!1} \!\! \log \frac{1}{\varepsilon} + 2\Delta -~2$,
	then     $\delta\big( \tilde{x}(k) \big) \leq \varepsilon $.
Observing $\delta\big( x(k) \big) \leq \delta\big( \tilde{x}(k)\big)$ for each positive integer~$k$ 
	shows $\varepsilon$-Agreement, which completes the proof.	
\end{proof}

\subsection{Coordinated network model}

Contrary to the nonsplit property,~$\tilde{G}(k)$ is rooted whenever~$G(k)$ is.
However some nodes in~$\tilde{G}(k)$ have no self-loops, and so we cannot apply
	Theorem~\ref{thm:coord} to the averaging algorithm and the virtual network model   
	corresponding to~(\ref{eq:recursion:partial}).
Then we will use another strategy which consists in proving that  each cumulative graph over $n\Delta$ 
	rounds is nonsplit. 
	
\begin{thm}\label{thm:coord:delay}
In a coordinated network model, every averaging algorithm with parameter~$\varrho$ achieves $\varepsilon$-consensus
in $n\Delta \left(\frac{1}{\varrho} \right)^{n\Delta} \log
\frac{1}{\varepsilon} + n\Delta-1$ rounds of a $\Delta$-bounded execution.
\end{thm}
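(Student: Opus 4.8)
The plan is to mimic the proof of Theorem~\ref{thm:nonsplit:delay}, replacing the assumption of a nonsplit communication graph at the central round with the weaker assumption that each $G(k)$ is merely rooted. The key structural fact established earlier is Lemma~\ref{lem:productrooted}: in a coordinated network model, any product of $n-1$ consecutive communication graphs is nonsplit. So rather than exhibiting a single common incoming neighbor in one round (as in the nonsplit case), I would first pass to the product graph $H = G(k+n\Delta-1) \circ \dots \circ G(k)$ at the level of the original $n$ processes. By Lemma~\ref{lem:productrooted}, any window of $n-1$ of these original-round graphs already composes to a nonsplit graph, so in particular $H$ is nonsplit: for any two original processes $p,q \in [n]$ there is a common incoming neighbor $r$.

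The hard part is lifting this nonsplit property of the original product graph to the lifted graph $\tilde{G}(k+n\Delta-1:k)$ on $n\Delta$ virtual nodes, because the lifted graph has the same defect noted at the start of Section~5.3: nodes of the form $p\Delta - d$ with $d \neq 0$ have no self-loops, so information does not propagate as freely through the virtual layers as it does in the synchronous case. The idea is that over $n\Delta$ rounds there is enough room to both realize the $n-1$ original rounds that produce the common neighbor $r$ (at the level of the $p\Delta$ nodes, which do carry self-loops by~\eqref{eq:tildeloop}) and to thread the delay-shift edges that move a value down through the virtual copies $r\Delta, r\Delta-1, \dots$ to reach arbitrary target nodes $a = p\Delta - d$ and $b = q\Delta - d'$. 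I would argue, exactly as in the proof of Theorem~\ref{thm:nonsplit:delay} with its Figure~\ref{fig:ofMatthias}, that by combining the self-loops available at the $p\Delta$ nodes with the unit shift-edges, one can build a directed path from $r\Delta$ to $a$ and from $r\Delta$ to $b$ inside $\tilde{G}(k+n\Delta-1:k)$; this shows $\tilde{G}(k+n\Delta-1:k)$ is nonsplit. The budget check is that $n-1$ rounds suffice to compose the original graphs into a nonsplit graph and $\Delta$ extra rounds of padding suffice to absorb the layer shifts, so $n\Delta \geq (n-1) + (\text{something} \le \Delta)$ windows comfortably fit; I expect the clean statement is that the whole cumulative graph over $n\Delta$ rounds is nonsplit.

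Once nonsplitness of the cumulative graph over every block of $n\Delta$ rounds is established, the conclusion follows by the same reduction used for Theorem~\ref{thm:nonsplit:delay}. Each $\tilde{W}(k)$ is stochastic with positive entries bounded below by~$\varrho$, so each positive entry of the $n\Delta$-fold product $\tilde{W}(k+n\Delta-1:k)$ is bounded below by $\varrho^{n\Delta}$. Thus the recursion~\eqref{eq:recursion:partial} is the synchronous execution of an averaging algorithm with parameter $\varrho^{n\Delta}$ in a generalized $n\Delta$-nonsplit network model of $n\Delta$ processes. Invoking Corollary~\ref{cor:nonsplit} with $K = n\Delta$ and $\varrho$ replaced by~$\varrho$ (the per-round lower bound) gives that $\delta(\tilde{x}(k)) \leq \varepsilon$ once $k \geq n\Delta \left(\frac{1}{\varrho}\right)^{n\Delta} \log \frac{1}{\varepsilon} + n\Delta - 1$, and since $\delta(x(k)) \leq \delta(\tilde{x}(k))$ this yields $\varepsilon$-Agreement; Validity is immediate because each component of $\tilde{x}(k)$ is a convex combination of the initial values. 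The main obstacle, then, is purely the combinatorial path-construction in the lifted graph verifying that the window length $n\Delta$ is large enough to both trigger the rooted-to-nonsplit composition and route values through the delay layers.
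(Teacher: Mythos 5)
Your overall architecture --- prove that every cumulative lifted graph $\tilde{G}(k+n\Delta-1:k)$ is nonsplit, then invoke Corollary~\ref{cor:nonsplit} with $K=n\Delta$ and per-round bound $\varrho$ --- is exactly the paper's, and your endgame (Validity, $\delta\big(x(k)\big)\leq\delta\big(\tilde{x}(k)\big)$) is fine. But there is a genuine gap in the step you yourself flag as the main obstacle, and it is precisely your budget: ``$n-1$ rounds suffice to compose the original graphs into a nonsplit graph and $\Delta$ extra rounds of padding suffice to absorb the layer shifts.'' A link $(u,v)$ of $G(\ell)$ does not lift to a link leaving $u\Delta$: by \eqref{eq:tildeW} its lift is a link whose \emph{tail} is the delayed copy $u\Delta-\delta$ with $\delta=\ell-\kappa_u^v(\ell)-1$ chosen adversarially in $\{0,\dots,\Delta-1\}$. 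So the value $v$ acquires at round $\ell$ may be a copy of $x_u$ that is $\Delta$ rounds stale, and two hops in consecutive rounds need not compose: if $(u,v)\in E(G(\ell))$ and $(v,w)\in E(G(\ell+1))$, the value $w$ receives at round $\ell+1$ can predate round $\ell$ and then carries no trace of what $v$ just learned from $u$. Consequently the common-ancestor paths produced by Lemma~\ref{lem:productrooted} applied to $n-1$ \emph{consecutive} graphs (or to your product $H$), which use one hop per round, cannot in general be threaded through $\tilde{G}(k+n\Delta-1:k)$. This is also why the picture of Theorem~\ref{thm:nonsplit:delay} and Figure~\ref{fig:ofMatthias} does not extrapolate verbatim: there the nonsplit hypothesis is used at a \emph{single} central round, one hop, so one stretch of padding on each side suffices (hence $2\Delta-1$); with $n-1$ hops the padding is a \emph{per-hop} cost, which is exactly why the window must have length $(n-1)\Delta+\Delta=n\Delta$ rather than $(n-1)+\Delta$.

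The repair closest to your plan is to apply Lemma~\ref{lem:productrooted} not to consecutive graphs but to the $n-1$ graphs $G(k+\Delta-1),G(k+2\Delta-1),\dots,G(k+(n-1)\Delta-1)$, spaced $\Delta$ apart. Given $a=p\Delta-d$ and $b=q\Delta-d'$, this yields a common ancestor $r$ of $p$ and $q$ with paths using one hop in each of these rounds; each hop now lifts, because between hop $i$ and hop $i+1$ the lifted path can wait at $r_i\Delta$ on its self-loop (positive by \eqref{eq:tildeloop}) until round $\kappa_{r_i}^{r_{i+1}}\big(k+(i+1)\Delta-1\big)\geq k+i\Delta-1$ and then descend at most $\Delta-1$ shift edges to reach the delayed copy from which the lifted hop-edge actually emanates; after the last hop, the remaining $\Delta$ rounds suffice to descend the $d\leq\Delta-1$ layers and land on $a$ exactly at round $k+n\Delta-1$. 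With this spacing your argument goes through and gives the stated bound, as a modular, black-box use of Lemma~\ref{lem:productrooted}. For comparison, the paper proceeds differently: it re-runs the monotonicity/stagnation argument of Lemma~\ref{lem:productrooted} \emph{inside} the lifted graph, on the sets $T_c(\ell)$ of original processes $p$ such that $(p\Delta,c)$ is a link of $\tilde{G}(\ell:k)$, sampled at the $\Delta$-spaced checkpoints $k+\ell\Delta-1$, and applies Proposition~\ref{prop:rooted} to $G(k+\ell\Delta)$ at the stagnation window. Both proofs turn on the same $\Delta$-spacing device, and that device is what is missing from your sketch.
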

\begin{proof}
Validity is clear because of the fact that every value in~$\tilde{x}(k)$, and
	hence in~$x(k)$ is a convex combination of values in~$x(0)$.

For $\varepsilon$-Agreement,  we first show that each matrix $\tilde{W}(k
+n\Delta-1:k)$ 
	is nonsplit.
We consider its associated graph $$\tilde{G}(k +n\Delta-1:k) = \tilde{ G }(k)\circ \dots \circ \tilde{ G}(k +n\Delta-1)$$ 
	and two arbitrary nodes $a = p\Delta -d$  and $b = q\Delta -d'$.
This then proves the theorem by Corollary~\ref{cor:nonsplit}.

\medskip

Let $k$ and $\ell$ be two positive integers such that $\ell \geq K$.
For every $b\in[n\Delta]$, we define the sets
\begin{equation*}
 \tilde{S}_{b}(\ell)
=
\{ a\in[\Delta n] \mid (a,b) \text{ is a link of } \tilde{G}(\ell:k) \}
\end{equation*}
and for every $c\in[n\Delta]$,
\begin{equation*}
T_c(\ell) 
=
\{ p\in[n] \mid (p\Delta,c) \text{ is a link of } \tilde{G}(\ell:k) \} \, .
\end{equation*}
Because each $(p\Delta-\delta,p\Delta-\delta+1)$ with $0\leq \delta\leq \Delta - 1$ 
	is a link of  all graphs~$\tilde{G}(k)$, we have $p\in T_a(k +d)$
and
$q\in T_b(k+d')$.
The existence of the self-loops at the nodes~$r\Delta$ for all $r\in[n]$
implies the monotonicity property $T_c(\ell) \subseteq T_c(\ell + 1)$, which in
particular gives
\begin{equation}\label{eq:T:init}
p\in T_a(k + \Delta - 1) 
\quad \text{and} \quad
q\in T_b(k + \Delta - 1) \, .
\end{equation}
Mimicking the proof of Theorem~\ref{thm:coord}, we show that 
\begin{equation}\label{eq:Ts:intersect}
T_{a}(k + n\Delta - 1)
\cap
T_{b}(k + n\Delta - 1)
\neq
\emptyset
\enspace,
\end{equation}
which then concludes the proof.

If $p=q$, then~\eqref{eq:Ts:intersect} clearly holds by~\eqref{eq:T:init}.
Otherwise, consider the nondecreasing sequences
	$$T_{a}(k+\Delta-1)\subseteq T_{a}(k+2\Delta-1)\subseteq\cdots\subseteq T_{a}(k+n\Delta-1)\, , $$ 
	$$	T_{b}(k+\Delta-1)\subseteq T_{b}(k+2\Delta-1)\subseteq\cdots\subseteq T_{b}(k+n\Delta-1) \, ,$$
and 
\begin{equation*}
\begin{split}
T_{a}(k+\Delta-1)
\cup
T_{b}(k+\Delta-1)
&
\subseteq
T_{a}(k+2\Delta-1)
\cup
T_{b}(k+2\Delta-1)
\\ &
 \subseteq \cdots \subseteq 
T_{a}(k + n\Delta - 1)
\cup
T_{b}(k + n\Delta - 1)
\, .
\end{split}
\end{equation*}

Because $\lvert T_{a}(k+\Delta-1)
\cup
T_{a}(k+\Delta-1)\rvert\geq 2$ if $p\neq q$ and
$\lvert T_{a}(k + n\Delta - 1)
\cup
T_{b}(k + n\Delta - 1)\rvert \leq n$, the latter sequence cannot be strictly
increasing.
There hence exists some $\ell\in\{1,\dots,n-1\}$ such that
\begin{equation*}
T_{a}\big(k + \ell \Delta - 1\big) 
\cup
T_{b}\big(k + \ell \Delta - 1\big) 
=
T_{a}\big(k + (\ell+1) \Delta - 1\big) 
\cup
T_{b}\big(k + (\ell+1) \Delta - 1\big) 
\enspace.
\end{equation*}
Because of the disjointness assumption, this means 
\begin{equation*}
T_{a}\big(k + \ell \Delta - 1\big) 
=
T_{a}\big(k + (\ell+1) \Delta - 1\big) 
\quad\text{and}\quad
T_{b}\big(k + \ell \Delta - 1\big) 
=
T_{b}\big(k + (\ell+1) \Delta - 1\big) 
\enspace.
\end{equation*}
We now show that both 
$T_{a}\big(k + \ell \Delta - 1\big)$ and $T_{b}\big(k + \ell
\Delta - 1\big)$ contain the roots of $G(k+\ell\Delta)$,
showing~\eqref{eq:Ts:intersect}.

Suppose not, i.e., without loss of generality $T_{a}\big(k + \ell
\Delta - 1\big)$ does not contain all roots of $G(k+\ell\Delta)$.
By Proposition~\ref{prop:rooted}, there is an edge $(p,q)$ in
$G(k+\ell\Delta)$ such that
\begin{equation}\label{eq:roots:delays}
q\in T_{a}\big(k + \ell
\Delta - 1\big)
\quad\text{and}\quad
p\not\in T_{a}\big(k + \ell
\Delta - 1\big)
\enspace.
\end{equation}
There hence exists some $\delta\in\{0,\dots,\Delta-1\}$ such that
$(p\Delta-\delta,q\Delta)$ is a link  of $\tilde{G}(k+\ell\Delta)$, which
means that
\begin{equation*}
p\Delta-\delta \in \tilde{S}_{a}(k+\ell\Delta)
\enspace.
\end{equation*}
It follows that $p\Delta\in \tilde{S}_{
a}(k+\ell\Delta+\Delta-1)$, which means
\begin{equation*}
p\in T_{a}(k + (\ell+1)\Delta-1) = T_{a}(k + \ell\Delta-1)
\enspace.
\end{equation*}
This is a contradiction to~\eqref{eq:roots:delays} and concludes the proof.
\end{proof}

Theorem~\ref{thm:coord:delay} shows that, even in the class of $\Delta$-bounded
	executions, approximate consensus is solvable in a network model~$\mathcal{N}$
	if~$\mathcal{N}$ is a coordinated model.
By Theorem~\ref{thm:noncoord}, the latter condition is also necessary for the subset of 
	1-bounded executions, and so {\it a fortiori} for $\Delta$-bounded executions.
The characterization of the network models in which approximate consensus is solvable 
	in Corollary~\ref{cor:characterization} then holds for computations with
	partially synchronous rounds as well as with  synchronous rounds.  

\section{Conclusion and Future Work}

The main goal of this paper has been to characterize the dynamic
     network models in which approximate consensus is solvable.
Interestingly anonymity of processes does not affect solvability in
     such networks.
We have further established some upper bounds on the time complexity
     of averaging algorithms, all of which solve approximate consensus
     in dynamic networks.
We have proved each of our computability and complexity results first
     for synchronous rounds and in a second step for partially
     synchronous rounds which allow for bounded message delays.

As for exact consensus, approximate consensus does not require strong
     connectivity and  it can be solved under the sole assumption of
     rooted communication graphs.
However contrary to the condition of a stable set of roots and
     identifiers supposed in~\cite{BRS12} for achieving consensus,
     approximate consensus can be solved even though roots arbitrarily
     change over time and processes are anonymous.
In these respects, approximate consensus seems to be more suitable than
     consensus for handling real network dynamicity.
	
A number of questions are suggested by this work.
For example, it would be of high interest to design approximate consensus algorithms
	that tolerate Byzantine process failure, i.e., arbitrary process behaviors.
Certain interesting questions also remain to be studied in the benign case.
In particular general lower bounds on the time complexity of approximate
consensus, be it for general algorithms or averaging algorithms would vastly
improve the comprehension of the approximate consensus problem.

\paragraph{Acknowledgments.} We wish to thank Alex Olshevsky for very helpful discussions on 
his work on consensus sets and Martin Perner and Martin Biely for many detailed comments.
\vspace{0.7cm}

\bibliographystyle{plain}

\bibliography{agents}

\begin{thebibliography}{10}

\bibitem{AB06}
David Angeli and Pierre-Alexandre Bliman.
\newblock Stability of leaderless discrete-time multi-agent systems.
\newblock {\em MCSS}, 18(4):293--322, 2006.

\bibitem{AFJ06}
Dana Angluin, Michael~J. Fischer, and Hong Jiang.
\newblock Stabilizing consensus in mobile networks.
\newblock In Phillip~B. Gibbons, Tarek Abdelzaher, James Aspnes, and Ramesh
  Rao, editors, {\em Distributed Computing in Sensor Systems}, volume 4026 of
  {\em Lecture Notes in Computer Science}, pages 37--50. Springer Berlin
  Heidelberg, 2006.

\bibitem{ALS94}
Hagit Attiya, Nancy~A. Lynch, and Nir Shavit.
\newblock Are wait-free algorithms fast?
\newblock {\em J. ACM}, 41(4):725--763, 1994.

\bibitem{AW05}
Hagit Attiya and Jennifer Welch.
\newblock {\em Distributed Computing}.
\newblock Wiley, 2005.

\bibitem{BT89}
Dimitri~P. Bertsekas and John~N. Tsitsiklis.
\newblock {\em Parallel and Distributed Computation: Numerical Methods}.
\newblock Athena Scientific, Belmont, MA, 1989.

\bibitem{BRS12}
Martin Biely, Peter Robinson, and Ulrich Schmid.
\newblock Agreement in directed dynamic networks.
\newblock In Guy Even and Magn\'us~M. Halldorsson, editors, {\em Proceedings of
  the 19th International Colloquium on Structural Information and Communication
  Complexity (SIROCCO)}, volume 7355 of {\em Lecture Notes in Computer
  Science}, pages 73--84. Springer, Heidelberg, 2012.

\bibitem{BO13}
Vincent Blondel and Alex Olshevshy.
\newblock How to decide consensus? {A} combinatorial necessary and sufficient
  condition and a proof that consensus is decidable but {NP}{-}hard.
\newblock {\tt available at http://arxiv.org/abs/1202.3167}, 2013.

\bibitem{CMA08a}
Ming Cao, A.~Stephen Morse, and Brian D.~O. Anderson.
\newblock Reaching a consensus in a dynamically changing environment: a
  graphical approach.
\newblock {\em SIAM Journal on Control and Optimization}, 47(2):575--600, 2008.

\bibitem{CMA08b}
Ming Cao, A.~Stephen Morse, and Brian D.~O. Anderson.
\newblock Reaching a consensus in a dynamically changing environment:
  convergence rates, measurement delays, and asynchronous events.
\newblock {\em SIAM Journal on Control and Optimization}, 47(2):601--623, 2008.

\bibitem{Cha13}
Bernadette Charron-Bost.
\newblock Orientation and connectivity based criteria for asymptotic consensus.
\newblock {\tt available at http://arxiv.org/abs/1303.2043}, 2013.

\bibitem{CS09}
Bernadette Charron{-}Bost and Andr\'e Schiper.
\newblock The {H}eard{-}{O}f model: computing in distributed systems with
  benign faults.
\newblock {\em Distributed Computing}, 22(1):49--71, 2009.

\bibitem{CS77}
Samprit Chatterjee and Eugene Seneta.
\newblock Towards consensus: some convergence theorems on repeated averaging.
\newblock {\em Journal of Applied Probability}, 14:89--97, 1977.

\bibitem{Chu97}
Fan~R. Chung.
\newblock Spectral graph theory.
\newblock In {\em Regional Conference Series in Mathematics}, volume~92 of {\em
  LNCS}. American Mathematical Society, Providence, RI, 1997.

\bibitem{CG13}
{\'E}tienne Coulouma and Emmanuel Godard.
\newblock A characterization of dynamic networks where consensus is solvable.
\newblock In {\em Structural Information and Communication Complexity - 20th
  International Colloquium, SIROCCO}, volume 8179 of {\em LNCS}, pages 24--35.
  Springer, 2013.

\bibitem{DeG74}
Morris~H. DeGroot.
\newblock Reaching a consensus.
\newblock {\em Journal of the American Statistical Association},
  69(345):118--121, 1974.

\bibitem{Dob56}
Roland~L. Dobrushin.
\newblock Central limit theorem for non-stationary {M}arkov chains {I}.
\newblock In {\em Theory of Probability and its Applications}, volume~1, pages
  65--80. American Mathematical Society, 1956.

\bibitem{DLPSW86}
Danny Dolev, Nancy~A. Lynch, Shlomit~S. Pinter, Eugene~W. Stark, and William~E.
  Weihl.
\newblock Reaching approximate agreement in the presence of faults.
\newblock {\em Journal of the ACM}, 33(2):499--516, July 1986.

\bibitem{Fek90}
Alan~D. Fekete.
\newblock Asymptotically optimal algorithms for approximate agreement.
\newblock {\em Distributed Computing}, 4:9--29, 1990.

\bibitem{FLP85}
Michael~J. Fischer, Nancy~A. Lynch, and Michael~S. Paterson.
\newblock Impossibility of distributed consensus with one faulty process.
\newblock {\em Journal of the ACM}, 32(2):374--382, April 1985.

\bibitem{HT13}
Julien~M. Hendrickx and John~N. Tsitsiklis.
\newblock Convergence of type-symmetric and cut-balanced consensus seeking
  systems.
\newblock {\em IEEE Transactions on Automatic Control}, 58(1):214--218, 2013.

\bibitem{KLO10}
Fabian Kuhn, Nancy~A. Lynch, and Rotem Oshman.
\newblock Distributed computation in dynamic networks.
\newblock In {\em Proceedings of the 42nd ACM Symposium on Theory of Computing,
  STOC}, pages 513--522. ACM, 2010.

\bibitem{KMO11}
Fabian Kuhn, Yoram Moses, and Rotem Oshman.
\newblock Coordinated consensus in dynamic networks.
\newblock In {\em Proceedings of the 30th Annual ACM Symposium on Principles of
  Distributed Computing, PODC}, pages 1--10. ACM, 2011.

\bibitem{LPW09}
David~A. Levin, Yuval Peres, and Elizabeth~L. Wilmer.
\newblock {\em Markov chains and mixing times}.
\newblock American Mathematical Society, Providence, RI, USA, 2009.

\bibitem{LMMAY11}
Ji~Liu, Shaoshuai Mou, A.~Stephen Morse, Brian D.~O. Anderson, and Changbin Yu.
\newblock Deterministic gossiping.
\newblock {\em Proceedings of the IEEE}, 99(9):1505--1524, 2011.

\bibitem{LL10}
Jan Lorenz and Dirk~A. Lorenz.
\newblock On conditions for convergence to consensus.
\newblock {\em IEEE Trans. Automat. Contr.}, 55(7):1651--1656, 2010.

\bibitem{Lyn96}
Nancy~A. Lynch.
\newblock {\em Distributed Algorithms}.
\newblock Morgan Kaufmann, San Francisco, CA, 1996.

\bibitem{Mor05}
Luc Moreau.
\newblock Stability of multiagent systems with time-dependent communication
  links.
\newblock {\em IEEE Transactions on Automatic Control}, 50(2):169--182, 2005.

\bibitem{Now13}
Thomas Nowak.
\newblock Products of stochastic matrices with aperiodic core.
\newblock {\tt arXiv:1301.3784 [math.DS]}, 2013.

\bibitem{OT11v1}
Alex Olshevsky and John~N. Tsitsiklis.
\newblock Degree fluctuations and the convergence time of consensus algorithms.
\newblock {\tt arXiv:1104.0454v1 [math.OC]}, 2011.

\bibitem{PBE07}
Stacy Patterson, Bassam Bamieh, and Amr~El Abbadi.
\newblock Distributed average consensus with stochastic communication failures.
\newblock In {\em Proceedings of the 46th IEEE Conference on Decision and
  Control, and the European Control Conference (CDC-ECC)}, pages 4215--4220.
  IEEE, New Orleans, LA, 2007.

\bibitem{SW89}
Nicola Santoro and Peter Widmayer.
\newblock Time is not a healer.
\newblock In {\em Proceedings of the 6th Symposium on Theoretical Aspects of
  Computer Science}, pages 304--313, Paderborn, Germany, 1989.

\bibitem{Sen79}
Eugene Seneta.
\newblock Coefficients of ergodicity : Structure and applications.
\newblock {\em Advances in Applied Probability}, 11(3):576--590, 1979.

\bibitem{TN11}
Behrouz Touri and Angelia Nedi\'c.
\newblock Product of random stochastic matrices.
\newblock {\tt arXiv:1110.1751 [math.PR]}, 2011.

\bibitem{Tsi84}
John~N. Tsitsiklis.
\newblock {\em Problems in Decentralized Decision Making and Computation}.
\newblock PhD thesis, Massachusetts Institute of Technology, 1984.

\bibitem{Vai14}
Nitin~H. Vaidya.
\newblock Iterative byzantine vector consensus in incomplete graphs.
\newblock In {\em ICDCN}, volume 8314 of {\em Lecture Notes in Computer
  Science}, pages 14--28. Springer, 2014.

\bibitem{XC12}
Weiguo Xia and Ming Cao.
\newblock Sarymsakov matrices and their application in coordinating multi-agent
  systems.
\newblock In Robert~D. Kleinberg, editor, {\em 31st Chinese Control
  Conference}, pages 6321--6326, 2012.

\end{thebibliography}

\end{document}